\DeclareMathOperator{\diag}{diag}
\DeclareMathOperator{\supp}{supp}
\DeclareMathOperator{\tr}{tr}
\DeclareMathAlphabet{\mathpzc}{OT1}{pzc}{m}{it}
\theoremstyle{plain}
\newtheorem{thm}{Theorem}
\newtheorem{lem}[thm]{Lemma}
\newtheorem{prop}[thm]{Proposition}
\theoremstyle{remark}
\title{Weighted Supermembrane Toy Model}
\author{Douglas Lundholm\thanks{e-mail: dogge@math.kth.se}}
\date{\scriptsize{Department of Mathematics, Royal Institute of Technology\\SE-100 44 \ Stockholm, Sweden}}
\begin{document}

\maketitle

\begin{abstract}
	A weighted Hilbert space approach to the study of 
	zero-energy states of supersymmetric matrix models is introduced.
	Applied to a related but technically simpler model, it is shown that the 
	spectrum of the corresponding weighted Hamiltonian 
	simplifies to become purely discrete for sufficient weights.
	This follows from a bound for the number of negative eigenvalues 
	of an associated matrix-valued Schr\"odinger operator.
\end{abstract}

{	\small
	\noindent
	\textbf{Mathematics Subject Classification (2000):} 81Q10, 81Q60, 35P20.
	\\
	\textbf{Keywords:} 
	supersymmetric matrix models,
	matrix-valued Schr\"odinger operator,
	Cwikel-Lieb-Rozenblum inequality.
}

\setlength\arraycolsep{2pt}
\def\arraystretch{1.5}

\section{Introduction}

There are many difficulties in the study of 
zero-energy states of supersymmetric matrix models.
Some arise due to the fact that the spectrum 
of the associated Hamiltonian is continuous\footnote{or
purely essential, to be precise}, starting at zero.
Accordingly, we expect it to be useful to shift
to a weighted Hilbert space on which the 
corresponding operator has a discrete spectrum.

In this work we illustrate the applicability of the technique
to a simplified model which, 
despite its technical simplicity,
still shares many of the features (and difficulties)
of the original matrix models. The spectral properties of this
so-called \emph{supermembrane toy model}, and its purely bosonic counterpart, 
has been previously studied in 
\cite{Hoppe,Simon,Simon_asymptotics,dWLN,Koubek,Hasler,Graf-Hasler-Hoppe,Korcyl},
while the underlying geometry of the model was emphasized in \cite{geosusy}.
Our results on the technically much more complicated matrix models
will be presented in a forthcoming paper.

In Section 2 we recall the formulation of the original toy model and introduce 
the weighted Hilbert space approach.
In Sections 3 and 4 we investigate the spectral properties of the
weighted model, and show that, under a certain condition on the
parameter of the weight, the spectrum of the weighted model is in
fact discrete. This is accomplished using a Cwikel-Lieb-Rozenblum-type bound 
for operator-valued potentials which is derived in Section 5 from a result in \cite{Hundertmark}.
We also note that the technique we use could provide
a geometric understanding of the eigenvalue asymptotics
of the purely bosonic model.

\section{The original and weighted models}

The supermembrane toy model,
also called the supersymmetric $x^2y^2$ potential, 
is defined by the Hamiltonian operator
\begin{equation} \label{toy_hamiltonian}
	H = -\Delta + V + H_F = -\partial_x^2 - \partial_y^2 + x^2y^2 + x\gamma_1 - y\gamma_2
\end{equation}
(where $\gamma_k$ are Pauli matrices),
acting on the Hilbert space 
$$
	\mathcal{H} = L^2(\mathbb{R}^2,dxdy) \otimes \mathbb{C}^2.
$$
A corresponding hermitian supercharge operator is given by
$$
	Q = -i(\partial_x\gamma_1 + \partial_y\gamma_2) + xy\gamma_3,
$$
such that $Q^2 = H \ge 0$.

The matrix-valued Schr\"odinger operator
$H$ can be formally defined as a self-adjoint operator 
through the closure of the quadratic form 
corresponding to the expression \eqref{toy_hamiltonian} on 
$C_0^\infty(\mathbb{R}^2) \otimes \mathbb{C}^2$.
The spectrum of $H$ is $\sigma(H) = [0,\infty)$ 
due to potential valleys along the coordinate axes, 
where the lower eigenvalue of the matrix potential,
$$
	V + (H_F)_- = x^2y^2 - \sqrt{x^2 + y^2},
$$
tends to negative infinity
-- precisely cancelling the localization energy due to
the narrowing of the valley.

We would like to make the spectrum of the model discrete by introducing 
a weighted Hilbert space. We define
\begin{equation*}
	\mathcal{H}_w := L^2(\mathbb{R}^2,\rho(x,y)dxdy) \otimes \mathbb{C}^2, \qquad
	\rho(x,y) := (1 + x^2 + y^2)^{-\frac{\alpha}{2}},
\end{equation*}
with $\alpha \ge 0$. The inner product on $\mathcal{H}_w$ is then given by
$$
	\langle \Phi, \Psi \rangle_w = \langle \Phi, \rho\Psi \rangle 
	= \int_{\mathbb{R}^2} \frac{ \langle \Phi(x,y),\Psi(x,y) \rangle_{\mathbb{C}^2} }{(1 + x^2 + y^2)^{\frac{\alpha}{2}}} dxdy.
$$
The corresponding self-adjoint operator $\tilde{H}$ on $\mathcal{H}_w$ 
is defined through the same quadratic form on $C_0^\infty(\mathbb{R}^2) \otimes \mathbb{C}^2$;
$$
	\langle \Psi, \tilde{H} \Psi \rangle_w := \langle \Psi, H \Psi \rangle = \|Q \Psi\|^2 \ge 0.
$$
It follows that, if we define $\tilde{Q} := \rho^{-\frac{1}{2}}Q$
(with adjoint w.r.t $\mathcal{H}_w$ given by $\tilde{Q}^* = \rho^{-1}Q\rho^{\frac{1}{2}}$)
we have
$$
	\langle \Psi, \tilde{H} \Psi \rangle_w = \| \tilde{Q}\Psi \|_w^2.
$$
We observe in general that any solution of $H\Psi = 0$ in $\mathcal{H}$
is also a solution of $\tilde{H}\Psi = 0$ in the weighted Hilbert space. 
On the other hand, finding a solution 
of $\tilde{H}\Psi = 0$ in $\mathcal{H}_w$ does
yield a (smooth\footnote{by elliptic regularity}) 
solution to the differential equation $H\Psi = 0$,
but its decay rate may be insufficient for square-integrability.
For this particular model it is known \cite{Graf-Hasler-Hoppe}
that there is \emph{no} solution in $\mathcal{H}$.

\section{Spectrum of $\tilde{H}$ for $\alpha < 2$}

Continuity of the spectrum of $H$ can be proved (see \cite{dWLN}) by,
for any $\mu \ge 0$, finding a Weyl sequence
$(\Psi_t)$ in $\mathcal{H}$ such that $\|\Psi_t\| = 1\ \forall t$ and
$$
	\langle \Psi_t, (H-\mu)^2 \Psi_t \rangle \to 0, \quad t \to \infty.
$$
Explicitly (and for $\mu = 0$ for simplicity), we take
$$
	\Psi_t(x,y) := \chi_t(x) \phi_x(y) \xi,
$$
where $\chi_t$ is a cut-off function s.t.
$\chi_t(x) := t^{-\frac{1}{2}} \chi(x/t)$, $\chi \in C_0^\infty[1,2]$, $\int_\mathbb{R} \chi^2 = 1$,
$\phi_x$ is the normalized groundstate of the harmonic oscillator $-\partial_y^2 + x^2y^2$,
$$
	\phi_x(y) := \left(\frac{x}{\pi}\right)^{\frac{1}{4}} e^{-\frac{1}{2}xy^2},
$$
and $\xi \in \mathbb{C}^2$ is a unit eigenvector,
$
	\gamma_1 \xi = -\xi.
$
One finds that $\|\Psi_t\|^2 = 1$ and
$$
	\langle \Psi_t, H \Psi_t \rangle 
	\ \le \ \int |\chi_t \chi_t''| dx + c_1 \int x^{-1} |\chi_t \chi_t'| dx + c_2 \int x^{-2} |\chi_t|^2 dx
	\ \le \ c t^{-2}
$$
(here, and in the following, $c,c_1$, etc. denote some positive constants).

Taking the same sequence for the weighted case, we also find
$$
	\langle \Psi_t, \tilde{H} \Psi_t \rangle_w = \langle \Psi_t, H \Psi_t \rangle \le c t^{-2}.
$$
However, the norm is now
\begin{eqnarray*}
	\|\Psi_t\|_w^2 
	&=& \int_x |\chi_t(x)|^2 \int_y \frac{\phi_x(y)^2}{(1 + x^2 + y^2)^{\frac{\alpha}{2}}} dy dx \\
	&\ge& c_1 \int_x \frac{|\chi_t(x)|^2}{(1 + 4t^2 + c_2t^{-1})^{\frac{\alpha}{2}}} dx
	\ge c_3 (1 + 4t^2)^{-\frac{\alpha}{2}},
\end{eqnarray*}
so that, for $\alpha < 2$,
$$
	\frac{ \langle \Psi_t, \tilde{H} \Psi_t \rangle_w }{ \|\Psi_t\|_w^2 } 
	\le c (1+4t^2)^{\frac{\alpha}{2}} t^{-2}
	\to 0, \quad t \to \infty.
$$
Thus, $\Psi_t$ still approximates a zero-energy eigenfunction, 
but since its support moves out to infinity, this indicates
that the spectrum of $\tilde{H}$ is still continuous for $0 \le \alpha < 2$.

\section{Spectrum of $\tilde{H}$ for $\alpha > 2$}

The spectrum of $\tilde{H}$ is discrete if and only if, for all $\lambda > 0$,
the rank of the spectral projection of $\tilde{H}$ on $(-\infty,\lambda)$ is finite.
Equivalently, if and only if
$$
	\sup_{W_\lambda} \dim W_\lambda < \infty,
$$
where $W_\lambda$ are subspaces of $C_0^\infty(\mathbb{R}^2) \otimes \mathbb{C}^2$ such that,
for all $\Psi \in W_\lambda$,
\begin{equation} \label{disc_condition}
	\langle \Psi, \tilde{H} \Psi \rangle_w < \lambda \|\Psi\|_w^2.
\end{equation}
Note that \eqref{disc_condition} is equivalent to
$$
	\langle \Psi, (H - \lambda\rho) \Psi \rangle < 0.
$$
It follows that the spectrum of $\tilde{H}$ is discrete if and only if 
the operator $H - \lambda\rho$, on the original Hilbert space $\mathcal{H}$, 
has finitely many negative eigenvalues for any $\lambda > 0$, 
or more precisely, $N(H - \lambda\rho) < \infty$,
where we denote by $N(A)$ the (possibly infinite) rank of the spectral 
projection on $(-\infty,0)$ of a self-adjoint operator $A$.
We will prove the following theorem.

\begin{thm} \label{thm_H_lambda}
	For all $\lambda > 0$ and $\alpha > 2$, the operator
	\begin{equation} \label{H_lambda}
		H_\lambda := H - \lambda\rho = -\partial_x^2 - \partial_y^2 + x^2y^2 + x\gamma_1 - y\gamma_2 - \frac{\lambda}{(1 + x^2 + y^2)^{\frac{\alpha}{2}}}
	\end{equation}
	has finitely many negative eigenvalues.
	Furthermore, the number of negative eigenvalues is bounded by
	\begin{equation} \label{H_lambda_bound}
		N(H_\lambda) \le 
			C(\alpha) + \frac{2^{12}\pi C_3}{27(\alpha-2)^3} \lambda^{\frac{3}{2} - \epsilon(\alpha)},
	\end{equation}
	where $C(\alpha)$ and $C_3$ are positive constants, 
	and $0 < \epsilon(\alpha) < \frac{1}{2}(\alpha-2)$.
\end{thm}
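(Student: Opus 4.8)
The plan is to bound $N(H_\lambda)$ directly, exploiting the valley geometry responsible for the essential spectrum of $H$. First I would localise with a smooth partition of unity adapted to the four half-axes: write $1=\chi_0^2+\sum_{j=1}^4\chi_j^2$, with $\chi_0$ supported in a bounded neighbourhood of the origin and each $\chi_j$ in a thin neighbourhood of one half-axis running to infinity, chosen so that $\sum_j|\nabla\chi_j|^2$ is bounded. The IMS formula $H_\lambda=\sum_j\chi_j H_\lambda\chi_j-\sum_j|\nabla\chi_j|^2$ together with the variational principle then gives $N(H_\lambda)\le\sum_{j=0}^4 N\big(\chi_j(H_\lambda+c)\chi_j\big)$ for a constant $c$ absorbing the localisation error. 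On the support of $\chi_0$ the potential is bounded below and $-\lambda\rho\ge-\lambda$, so this term is finite; after fixing the core radius independently of $\lambda$ and discarding the confined transverse modes there, it contributes the constant $C(\alpha)$. By the reflection symmetry of the model all four valley terms are equal, so it remains to bound the count in a single valley, say along the positive $x$-axis.

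In the valley I would treat $x$ as a base variable and regard the transverse operator $h_\perp(x):=-\partial_y^2+x^2y^2+x\gamma_1-y\gamma_2$ as an $x$-dependent self-adjoint operator on $\mathcal G:=L^2(\mathbb R_y)\otimes\mathbb C^2$, so that the valley part of $H_\lambda$ becomes the operator-valued Schr\"odinger operator $-\partial_x^2\otimes I_{\mathcal G}+\mathcal W(x)$ with $\mathcal W(x)=h_\perp(x)-\lambda\rho(x,\cdot)$. The key spectral input is the supersymmetric structure of $h_\perp(x)$: completing the square in $-\partial_y^2+x^2y^2-y\gamma_2$ and diagonalising $x\gamma_1$ shows that, for large $x$, its eigenvalues cluster near $2m|x|$, $m=0,1,2,\dots$, so the \emph{lowest} eigenvalue tends to $0$ while all higher ones are separated by a gap of order $|x|$. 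Thus $\mathcal W(x)_-$ is carried by the finitely many channels with $2m|x|<\lambda\rho$, and in the lowest channel the reduced one-dimensional operator is, to leading order, the critically-perturbed inverse-square operator $-\partial_x^2-\tfrac{1}{4x^2}-\lambda|x|^{-\alpha}$ --- borderline nonnegative when $\lambda=0$ (consistent with $H\ge0$) and with finitely many bound states precisely when the decaying perturbation beats $x^{-2}$, i.e.\ $\alpha>2$.

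With these estimates in hand I would apply the operator-valued Cwikel--Lieb--Rozenblum bound of Section~5 (with constant $C_3$) to $-\partial_x^2\otimes I_{\mathcal G}+\mathcal W(x)$, reducing $N(H_\lambda)$ to an integral of $\tr_{\mathcal G}\mathcal W(x)_-^{\,p}$ against $dx$. Summing the channel contributions $(\lambda\rho-2m|x|)_+^{p}$ gives an integrand of size $\lambda^{3/2}|x|^{-3\alpha/2-1}$ from the excited channels and $\lambda^{1/2}|x|^{-\alpha/2}$ from the supersymmetric lowest channel; the latter integrates to a finite value over $[x_0,\infty)$ exactly when $\alpha>2$, which is where the threshold and the singular factors $(\alpha-2)^{-1}$ enter. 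Optimising the split radius and absorbing a small loss $\epsilon(\alpha)$ in the power of $\lambda$ (forced because the available CLR input is not sharp at the endpoint $\alpha=2$), and collecting the three near-threshold radial integrations --- whose product accounts for the cube $(\alpha-2)^{-3}$ and the numerical constant $2^{12}\pi/27=(16/3)^3\pi$ --- yields \eqref{H_lambda_bound}.

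The main obstacle is the transverse spectral analysis of $h_\perp(x)$: I must control, uniformly in $x$, both the rate at which the lowest eigenvalue approaches $0$ and the size of the gap, \emph{including} the non-commuting cross term $-y\gamma_2$ that spoils exact diagonalisation, since it is the precise balance $-\tfrac14 x^{-2}$ in the lowest channel that pins the threshold at $\alpha=2$. A second, related difficulty is that $H$ has no spectral gap above $0$, so a naive Birman--Schwinger treatment of $H-\lambda\rho$ fails; the localisation-plus-transverse-reduction is exactly what replaces the missing gap by the oscillator gap of order $|x|$ and makes the operator-valued CLR applicable. Verifying the hypotheses of that bound for $\mathcal W(x)$ --- in particular the requisite trace control of $\mathcal W(x)_-$ and its integrability down to $\alpha=2$ --- is the final technical point.
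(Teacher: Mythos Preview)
Your overall architecture---localise, reduce each valley to a one-dimensional Schr\"odinger operator with operator-valued potential, extract the critical $-\tfrac{1}{4x^2}$ from the transverse ground state, and apply the operator-valued CLR bound---matches the paper's. Two of your steps, however, do not go through as stated.

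\textbf{The partition does not cover the plane.} You ask for $\chi_0$ with compact support and each $\chi_j$ supported in a \emph{thin} neighbourhood of a half-axis. The union of these supports misses the four diagonal sectors $|x|\sim|y|\to\infty$, so $\chi_0^2+\sum_j\chi_j^2=1$ cannot hold. The paper flags exactly this obstruction (``inconvenient intersections between regions of this form'') and resolves it by passing to parabolic coordinates $(u,v)=(\tfrac12(x^2-y^2),xy)$: the ``core'' region $\mathcal A=\{|u|<M\}$ is then itself unbounded, but along its unbounded direction the scalar potential behaves like $v^2$ and confines, so $N(H_\lambda^{\mathcal A})<\infty$ follows from the fact that $V^{\mathcal A}\to+\infty$. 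If you want to stay in cartesian coordinates you must either let $\chi_0$ cover the diagonal sectors (and then argue separately that the potential confines there), or use angular wedges rather than strips---in either case the analysis of the core region is no longer that of a bounded domain.

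\textbf{The core contribution is not $C(\alpha)$.} Even granting a bounded core of fixed radius $R$, the operator there is bounded below by $-\Delta - c -\lambda$ on a disk, and Weyl asymptotics give $N\sim R^2\lambda$, not a $\lambda$-independent constant. In the paper the cut-off $M$ is taken to \emph{grow} with $\lambda$ (specifically $M\sim\lambda^{2/3}$, which is also needed so that only the transverse ground state contributes to the valley estimate), and the core is then controlled by a two-dimensional CLR-type inequality, yielding a contribution of order $o(\lambda^{4/3})$; this is subleading to the valley term $\lambda^{3/2-\epsilon(\alpha)}$ and is absorbed there, not in $C(\alpha)$.

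A smaller point: the explicit constant $\tfrac{2^{12}\pi C_3}{27(\alpha-2)^3}$ does not arise from ``three near-threshold radial integrations'' but from a single one. After the substitution $u=e^t$ that turns $-\partial_u^2-\tfrac{1}{4u^2}$ into the free Laplacian (this is the content of the paper's Lemma on $H_2$), the valley bound reduces to $8\pi C_3\lambda^{3/2}\int_M^\infty u^{-1-\frac34(\alpha-2)}(\ln u)^2\,du$; evaluating $\int_0^\infty e^{-as}s^2\,ds=2/a^3$ with $a=\tfrac34(\alpha-2)$ and multiplying by four valleys gives $32\pi C_3\cdot\tfrac{128}{27(\alpha-2)^3}=\tfrac{2^{12}\pi C_3}{27(\alpha-2)^3}$.
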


\noindent
Our strategy is to prove 
this by splitting 
the domain of the operator into
different regions -- based on the geometry of the potential valleys --
and introducing 
Dirichlet boundary conditions between these regions
by means of a partition of unity.
The unbounded region along the potential valley is then 
shown to admit only finitely many negative eigenvalues
using a Cwikel-Lieb-Rozenblum bound for operator-valued potentials.
In order to illustrate the latter part of
this procedure, let us first prove that $H_\lambda$
defined on the region $x>1$ 
and with Dirichlet boundary condition at $x=1$ 
has finitely many eigenvalues below zero when $\alpha > 2$.
However, despite the fact that there is a reflection symmetry between $x$ and $y$,
this result cannot be directly applied to prove Theorem \ref{thm_H_lambda}
because of inconvenient intersections between regions of this form.
Instead, we will introduce a different set of coordinates and 
define the regions with respect to those.

\subsection{Cartesian coordinates}

In the cartesian coordinates $(x,y)$ we consider the region
$\Omega := (1,\infty) \times \mathbb{R}$
and the semi-bounded operator $H_\lambda^{xy}$ defined by closure of the
quadratic form corresponding to \eqref{H_lambda} on 
$C_0^\infty(\Omega) \otimes \mathbb{C}^2$.
Note that, for $\Psi \in C_0^\infty(\Omega) \otimes \mathbb{C}^2$,
\begin{eqnarray*}
	\lefteqn{ \int_\Omega \langle \Psi, H_\lambda^{xy} \Psi \rangle_{\mathbb{C}^2} dxdy }\\
	&\ge& \int_\Omega \left\langle \Psi, \left(
	 	-\partial_x^2 -\partial_y^2 + x^2 \left( y - \frac{1}{2x^2}\gamma_2 \right)^2 - \frac{1}{4x^2} - x - \frac{\lambda}{x^{\alpha}}
		\right) \Psi \right\rangle_{\mathbb{C}^2} dxdy.
\end{eqnarray*}
Choosing the representation $\gamma_2 = \diag(1,-1)$, 
and making separate coordinate transformations 
$\tilde{y} := y \pm \frac{1}{2x^2}$
in the integrals over the corresponding components of $\Psi$, 
we find
$$
	H_\lambda^{xy} 
	\ge -\partial_x^2 - \frac{1}{4x^2} -\partial_{\tilde{y}}^2 + x^2 \tilde{y}^2 - x - \frac{\lambda}{x^{\alpha}},
$$
where the two-dimensional scalar Schr\"odinger operator on the
r.h.s. can be considered as a Schr\"odinger operator on the
interval $(1,\infty)$ with an operator-valued potential
$V(x) = -\partial_{\tilde{y}}^2 + x^2 \tilde{y}^2 - x - \lambda x^{-\alpha}$
acting on $L^2(\mathbb{R},d\tilde{y})$.
This shifted harmonic oscillator, 
with the projection onto its $k$:th eigenvector denoted $P_k$,
is bounded below by its negative part 
$$
	V(x)_- = \sum_{k=0}^\infty (2kx - \lambda x^{-\alpha})_- P_k
	\ge -\lambda x^{-\alpha} \sum_{0 \le k \le \lambda/2} P_k.
$$
Applying Lemma \ref{lem_integral_crit_bound} below
(with a factor 2 coming from the trace over $\mathbb{C}^2$), we find
\begin{eqnarray*}
	N(H_\lambda^{xy}) 
	&\le& N\left( \left(-\partial_x^2 - \frac{1}{4x^2}\right) - \frac{\lambda}{x^{\alpha}} \sum_{0 \le k \le \lambda/2} P_k \right) \\
	&\le& 8\pi C_3 \int_1^\infty (1 + \lambda/2)\left( \lambda x^{-\alpha} \right)^{\frac{3}{2}} x^2 (\ln x)^2 \thinspace dx,
\end{eqnarray*}
which is finite for $\alpha > 2$.

\subsection{Parabolic coordinates}

Consider the coordinate transformation (cp. \cite{geosusy})
$$
	(x,y) \mapsto (u,v) := \left( \frac{1}{2}(x^2-y^2), xy \right),
$$
which is conformal everywhere except at the origin, 
and maps e.g. the open right half-plane bijectively onto 
the whole plane with the negative real line removed.
We introduce the regions (see Figure \ref{fig_regions})
\begin{figure}[t]
	\centering
	\input{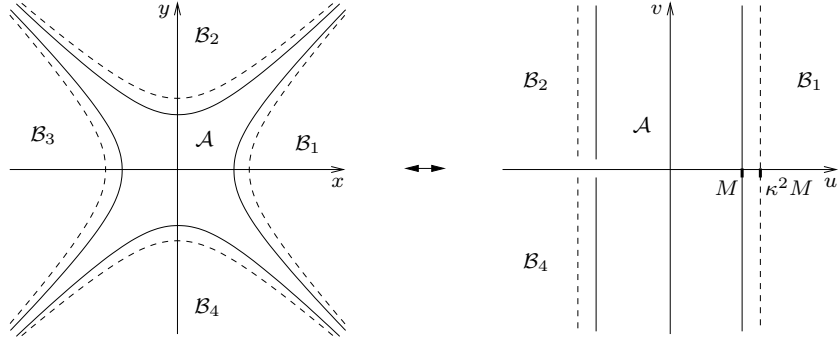_t}
	\caption{Partition of $\mathbb{R}^2$ into the regions 
		$\mathcal{A},\mathcal{B}_1,\mathcal{B}_2,\mathcal{B}_3,\mathcal{B}_4$.}
	\label{fig_regions}
\end{figure}
\begin{eqnarray*}
	\mathcal{A} 	&:& -M < u < M, \\
	\mathcal{B}_1	&:& u > M, x>0,
\end{eqnarray*}
and the corresponding reflections $\mathcal{B}_{2,3,4}$ 
of $\mathcal{B}_1$ in the 
symmetry lines $x=0$ and $x=y$,
together with their union 
$\mathcal{B} := \cup_{j=1}^4 \mathcal{B}_j = \mathbb{R}^2 \setminus \bar{\mathcal{A}}$.
We will also make use of rescaled versions of these regions, 
e.g. $\kappa \mathcal{A}$, with a fixed $\kappa > 1$.

Take a partition of unity, $1 = \chi_{\mathcal{A}}^2 + \chi_{\mathcal{B}}^2$, 
such that $\chi_{\mathcal{A},\mathcal{B}} \in C^{\infty}(\mathbb{R}^2;[0,1])$,
$\chi_{\mathcal{A}} = 1$ on $\mathcal{A}$, and 
$\chi_{\mathcal{B}} = 1$ on $\kappa \mathcal{B}$.
It follows that, for any $\Psi \in C_0^\infty(\mathbb{R}^2) \otimes \mathbb{C}^2$,
\begin{equation} \label{quad_form_split}
	\langle \Psi, H_\lambda \Psi \rangle 
	= \langle \Psi, H_\lambda (\chi_{\mathcal{A}}^2 + \chi_{\mathcal{B}}^2)\Psi \rangle
	= \langle \chi_{\mathcal{A}}\Psi, H_\lambda^{\mathcal{A}} \chi_{\mathcal{A}}\Psi \rangle
	+ \langle \chi_{\mathcal{B}}\Psi, H_\lambda^{\mathcal{B}} \chi_{\mathcal{B}}\Psi \rangle,
\end{equation}
where
\begin{equation} \label{H_A_B}
	H_\lambda^{\mathcal{A},\mathcal{B}} := H_\lambda - |\nabla \chi_{\mathcal{A}}|^2 - |\nabla \chi_{\mathcal{B}}|^2
\end{equation}
denotes the corresponding operator restricted to the domain
$\kappa \mathcal{A}$ resp. $\mathcal{B}$ with Dirichlet 
boundary condition at the boundary $|u| = \kappa^2 M$ resp. $|u| = M$.
As we will see, the additional negative potential terms in \eqref{H_A_B},
denoted $-V_\chi$, will not cause any problems because
they are supported on a region $\kappa\mathcal{A} \cap \mathcal{B}$
where the potential tends rapidly to infinity.
Using that $N(A+B) \le N(A) + N(B)$ for any two self-adjoint operators $A,B$,
we obtain from the quadratic form expression \eqref{quad_form_split} that
$$
	N(H_\lambda) \le N(H_\lambda^{\mathcal{A}}) + N(H_\lambda^{\mathcal{B}})
	= N(H_\lambda^{\mathcal{A}}) + \sum_{j=1}^4 N(H_\lambda^{\mathcal{B}_j}).
$$

Consider first the region $\mathcal{B}_1$.
Under the coordinate transformation, we find (cp. \cite{geosusy})
$
	\Delta_{xy} = h^{-2}\Delta_{uv}
$
and $dxdy = h^2dudv$,
with scale factor 
$
	h = (x^2+y^2)^{-\frac{1}{2}} = 2^{-\frac{1}{2}}(u^2+v^2)^{-\frac{1}{4}},
$
so that for any $\Psi \in C_0^\infty(\mathcal{B}_1) \otimes \mathbb{C}^2$
\begin{eqnarray} \label{quad_form_transformation} 
	\lefteqn{ \int_{\mathcal{B}_1} \langle \Psi, (-\Delta_{xy} + x^2y^2 + x\gamma_1 - y\gamma_2 - \lambda\rho - V_\chi) \Psi \rangle_{\mathbb{C}^2} \thinspace dxdy } \nonumber \\
	&=& \int_{\mathcal{B}_1} \langle \Psi, (-\Delta_{uv} + h^2v^2 + h\gamma_u - \lambda h^2\rho - V_\chi^{uv}) \Psi \rangle_{\mathbb{C}^2} \thinspace dudv \nonumber \\
	&\ge& \int_{u=M}^{\infty} \int_{v=-\infty}^{\infty} \Big\langle \Psi, 
		\bigg(-\partial_u^2 -\partial_v^2 + \frac{v^2}{2\sqrt{u^2+v^2}} - \frac{1}{\sqrt{2}(u^2+v^2)^{\frac{1}{4}}} \\
	&&\hspace{3.45cm}  - \frac{\lambda}{2\sqrt{u^2+v^2}(1+2\sqrt{u^2+v^2})^{\frac{\alpha}{2}}}  - V_\chi^{uv} \bigg) 
		\Psi \Big\rangle_{\mathbb{C}^2} dvdu,  \nonumber
\end{eqnarray}
where $\gamma_u := h(x\gamma_1-y\gamma_2)$, so that $\gamma_u^2=1$.
We have also used that
$$
	h^2 V_\chi = (h|\nabla_{xy}\chi_{\mathcal{A}}|)^2 + (h|\nabla_{xy}\chi_{\mathcal{B}}|)^2 
	= |\nabla_{uv}\chi_{\mathcal{A}}|^2 + |\nabla_{uv}\chi_{\mathcal{B}}|^2
	=: V_\chi^{uv},
$$
which (with a suitably chosen $\chi_\mathcal{A}$)
is independent of $v$, bounded by $c_1/M^2$,
and supported on $M \le |u| \le \kappa^2 M$.

Let us think of the resulting scalar Schr\"odinger 
operator in the r.h.s. of \eqref{quad_form_transformation}, 
call it $H^{uv}_\lambda$,
as acting on $L^2([M,\infty),du) \otimes \mathpzc{h}$ 
with fiber $\mathpzc{h} = L^2(\mathbb{R},dv)$.
Let
\begin{equation} \label{H_u}
	H_u := -\partial_v^2 + \frac{v^2}{2(u^2+v^2)^{\frac{1}{2}}} - \frac{1}{\sqrt{2}(u^2+v^2)^{\frac{1}{4}}},
\end{equation}
denote part of the one-dimensional Schr\"odinger operator acting on $\mathpzc{h}$,
and observe that
\begin{equation} \label{H_uv_estimate}
	H^{uv}_\lambda 
	\ge -\partial_u^2 + \left( H_u - \frac{\lambda}{2u(1+2u)^{\frac{\alpha}{2}}} - V_\chi^{uv} \right)_-
	\ge -\partial_u^2 + \left( H_u - \frac{\lambda}{u^{1+\frac{\alpha}{2}}} - V_\chi^{uv} \right)_-,
\end{equation}
where $(A)_-$ denotes the spectral projection on the negative part of the spectrum of $A$.

First, let us make a rough estimate of the spectrum of $H_u$ to
prove that this spectral projection is one-dimensional 
when $u$ is sufficiently large.
Splitting $H_u$ into three regions (again using a partition of unity), 
$v < -\delta u$, $-u < v < u$, resp. $v > \delta u$,
with a fixed $0 < \delta < 1$ and Dirichlet boundary conditions 
at $v=\pm u$ resp. $v=\pm \delta u$, we find
(with some constant $c_2 \ge (1-\delta)^{-2}$)
$$
	H_u\big|_{|v|<u} 
	\ge -\partial_v^2 + \frac{v^2}{2(u^2+u^2)^{\frac{1}{2}}} - \frac{1}{\sqrt{2u}} - \frac{c_2}{u^2}
	  = -\partial_v^2 + \frac{1}{2^{\frac{3}{2}}u}v^2 - \frac{1}{\sqrt{2u}} - \frac{c_2}{u^2},
$$
whose spectrum is bounded below by
$\left\{ \frac{1}{\sqrt{2^{3/2}u}}(2k+1) - \frac{1}{\sqrt{2u}} - \frac{c_2}{u^2} \right\}_{k=0,1,2,\ldots}$,
while
$$
	H_u\big|_{v > \delta u}
	\ge -\partial_v^2 + \frac{v^2}{2(\delta^{-2}v^2+v^2)^{\frac{1}{2}}} - \frac{1}{\sqrt{2u}} - \frac{c_2}{u^2}
	\ge \frac{\delta^2 u}{2\sqrt{2}} - \frac{1}{\sqrt{2u}} - \frac{c_2}{u^2},
$$
and similarly for $H_u|_{v < -\delta u}$, so that
\begin{eqnarray*}
	N\left(H_u - \frac{\lambda}{u^{1+\frac{\alpha}{2}}}\right)
	&\le& N\left(H_u\big|_{v<-\delta u} - \frac{\lambda}{u^{1+\frac{\alpha}{2}}}\right) 
	+ N\left(H_u\big|_{|v|<u} - \frac{\lambda}{u^{1+\frac{\alpha}{2}}}\right) \\
	&& +\ N\left(H_u\big|_{v> \delta u} - \frac{\lambda}{u^{1+\frac{\alpha}{2}}}\right)
	\le 0+1+0
\end{eqnarray*}
for $\alpha \ge 2$ and $u$ sufficiently large.
Hence, only the ground state energy of $H_u$ contributes to \eqref{H_uv_estimate}
when $M$ is taken sufficiently large,
e.g. $M \ge (c_1 + c_2 + \lambda)^{\frac{2}{3}}$ 
with $\delta = 0.8$.

A sufficient bound for the ground state energy is provided by
Proposition \ref{prop_H_hat_bound} below, showing that
$H_u \ge -\frac{1}{4u^2}$ for all $u > 0$, so that
\eqref{H_uv_estimate} becomes
$$
	H^{uv}_\lambda 
	\ge \left( -\partial_u^2 - \frac{1}{4u^2} \right) \otimes 1_{\mathpzc{h}} 
	- \left( \frac{\lambda}{u^{1+\frac{\alpha}{2}}} + V_\chi^{uv} \right) \otimes P_0,
$$
where $P_0$ denotes the projection onto the ground state of $H_u$.
Applying Lemma \ref{lem_integral_crit_bound}, we find
(after extending trivially to $[1,\infty)$)
\begin{eqnarray}
	N(H^{uv}_\lambda) 
	&\le& 8\pi C_3 \int_M^{\infty} \left( \frac{\lambda}{u^{1+\frac{\alpha}{2}}} + V_\chi^{uv} \right)^{\frac{3}{2}} u^2 (\ln u)^2 \thinspace du \nonumber \\
	&\le& 
		c_3 \left(\lambda M^{1 - \frac{\alpha}{2}} + c_1\right)^{\frac{3}{2}} (\ln \kappa^2M)^2
		+ 8\pi C_3 \lambda^{\frac{3}{2}} \int_{\kappa^2 M}^{\infty} u^{\frac{1}{2}-\frac{3\alpha}{4}} (\ln u)^2 \thinspace du,
		\nonumber \\ \label{H_uv_bound}
\end{eqnarray}
which is finite for $\alpha>2$.
This implies that $N(H_\lambda^{\mathcal{B}_1}) < \infty$ 
and, by reflection symmetry, also $N(H_\lambda^{\mathcal{B}}) < \infty$
for all $\lambda>0$.

It remains to prove that $N(H_\lambda^{\mathcal{A}}) < \infty$.
Taking the scalar lower bound for the potential of $H_\lambda^{\mathcal{A}}$,
$$
	V^\mathcal{A} := x^2y^2 - \sqrt{x^2 + y^2} - \lambda(1 + x^2 + y^2)^{-\frac{\alpha}{2}} - V_\chi,
$$
and using that
$$
	V_\chi = (x^2 + y^2)V_\chi^{uv} \le 2\sqrt{u^2 + v^2} c_1M^{-2},
$$
we have on the region $\kappa\mathcal{A}$
\begin{eqnarray}
	H_\lambda^{\mathcal{A}} 
	&\ge& -\Delta_{xy} + v^2 - \sqrt{2}(u^2+v^2)^{\frac{1}{4}} - \frac{\lambda}{(1 + 2\sqrt{u^2+v^2})^{\frac{\alpha}{2}}} - V_\chi \nonumber \\
	&\ge& -\Delta_{xy} + v^2 - \sqrt{2}(\kappa^4M^2+v^2)^{\frac{1}{4}} - \lambda - 2\sqrt{\kappa^4M^2 + v^2} c_1M^{-2},
	\label{H_A_operator_bound}
\end{eqnarray}
and since the potential of the Schr\"odinger operator on the right hand side
tends to infinity as $|\boldsymbol{x}| \to \infty \Rightarrow |v| \to \infty$,
it follows that the spectrum of $H_\lambda^{\mathcal{A}}$ is purely discrete,
and $N(H_\lambda^{\mathcal{A}}) < \infty$. 

We have proved
the first statement of Theorem \ref{thm_H_lambda}.
The second statement follows from \eqref{H_uv_bound} 
with $M = (c_1 + c_2 + \lambda)^{\frac{2}{3}}$,
together with the following bound for scalar Schr\"odinger operators
in two dimensions 
(see e.g. Theorem 20, Chapter 8.4 in \cite{Egorov-Kondratev}):
\begin{equation} \label{two-dimensional_bound}
	N(-\Delta + V) \le 1 + C_q \int_{\mathbb{R}^2} |V(\boldsymbol{x})_-|^q \left(1 + |\ln |\boldsymbol{x}||\right)^{2q-1} |\boldsymbol{x}|^{2(q-1)} \thinspace d\boldsymbol{x},
\end{equation}
with $q > 1$ and $C_q$ a positive constant\footnote{Whether 
the bound \eqref{two-dimensional_bound} extends to $q=1$ is currently unknown 
(we note that there is an error in \cite{Egorov-Kondratev-AMS}).}.
Extending $V^\mathcal{A}$ by zero outside $\kappa \mathcal{A}$, it follows that
$$
	N(H_\lambda^{\mathcal{A}}) \le 2 + 2C_q \int_{\kappa \mathcal{A}} |V^\mathcal{A}(\boldsymbol{x})_-|^q \left(1 + |\ln |\boldsymbol{x}||\right)^{2q-1} |\boldsymbol{x}|^{2(q-1)} \thinspace d\boldsymbol{x}.
$$
For large $\lambda$, we have on the unbounded region
$|v| \ge \kappa^2M$ (similarly to \eqref{H_A_operator_bound}) that
$$
	V^\mathcal{A} \ge v^2 - \sqrt{2}(2v^2)^{\frac{1}{4}} - \lambda(1 + 2|v|)^{-\frac{\alpha}{2}} - 2(2v^2)^{\frac{1}{2}} c_1M^{-2} \ge 0.
$$
Hence, the integral reduces to the bounded region 
$|u|,|v| < \kappa^2M$, i.e.
\begin{eqnarray*}
	N(H_\lambda^{\mathcal{A}}) 
	&\lesssim& \int_{x^2+y^2 < 2^{\frac{3}{2}} \kappa^2 M} (-V^\mathcal{A})^q_+ \left(1 + |\ln \sqrt{x^2 + y^2}|\right)^{2q-1} (x^2 + y^2)^{q-1} \thinspace dxdy \\
	&\le& \int_{0}^{c_4\lambda^{\frac{1}{3}}} \int_{-\pi}^{\pi}
		\left( -\frac{r^4}{4} \sin^2 2\varphi + r + \lambda(1 + r^2)^{-\frac{\alpha}{2}} + r^2 c_1M^{-2} \right)^q_+ \\
	&&\hspace{2.5cm}	\cdot \left(1 + |\ln r|\right)^{2q-1} r^{2q-1} \thinspace dr d\varphi,
\end{eqnarray*}
where we switched to polar coordinates $(r,\varphi)$.
Furthermore, this region increases in size with
$\lambda$ at a faster rate than the geometry of the potential valleys,
so we can split the integral into a central part 
and four narrowing regions along the valleys (see Figure \ref{fig_valleys}).
\begin{figure}[t]
	\centering
	\input{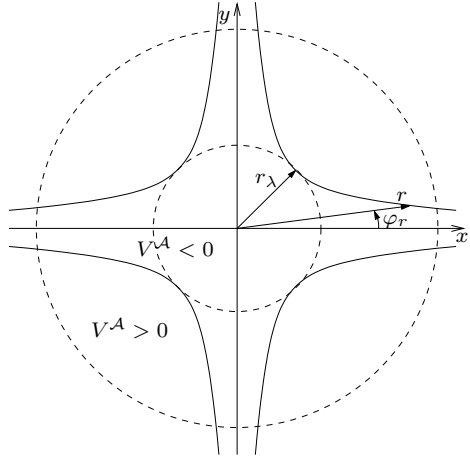_t}
	\caption{The bounded region where $V^\mathcal{A}<0$.}
	\label{fig_valleys}
\end{figure}
We obtain the bound
\begin{eqnarray*}
	\lefteqn{ 2\pi \int_{0}^{r_\lambda}
		\left( r + \lambda(1 + r^2)^{-\frac{\alpha}{2}} + r^2 c_1\lambda^{-\frac{4}{3}} \right)^q \left(1 + |\ln r|\right)^{2q-1} r^{2q-1} \thinspace dr }\\
	&& +\ 8 \int_{r_\lambda}^{c_4\lambda^{\frac{1}{3}}} \int_{0}^{\varphi_r} 
		\left( r + \lambda(1 + r^2)^{-\frac{\alpha}{2}} + r^2 c_1\lambda^{-\frac{4}{3}} \right)^q (1 + \ln r)^{2q-1} r^{2q-1} \thinspace dr d\varphi,
\end{eqnarray*}
where $r_\lambda$ is the solution to
$$
	-\frac{1}{4}r^4 + r + \lambda(1 + r^2)^{-\frac{\alpha}{2}} + r^2 c_1M^{-2} = 0,
$$
i.e. $r_\lambda \sim \lambda^{\frac{1}{4+\alpha}}$,
and $\varphi_r \sim \frac{1}{r^2}(r + \lambda r^{-\alpha})^{\frac{1}{2}}$.
The first integral is bounded by
\begin{equation} \label{first_integral_bound}
	r_\lambda^{2q} \left(r_\lambda + \lambda + r_\lambda^2 c_1 \lambda^{-\frac{4}{3}}\right)^q (1 + \ln r_\lambda)^{2q-1}
	\le c_5 \lambda^{q\frac{6+\alpha}{4+\alpha}} (\ln \lambda)^{2q-1},
\end{equation}
and the second by
\begin{equation} \label{second_integral_bound}
	c_6 \int_{c_7\lambda^{\frac{1}{4+\alpha}}}^{c_4\lambda^{\frac{1}{3}}} 
		\frac{1}{r^2} ( \underbrace{r + \lambda r^{-\alpha} + r^2 c_1 \lambda^{-\frac{4}{3}}}_{ \le 3\lambda^{\frac{2}{3}} } )^{q + \frac{1}{2}} (r\ln r)^{2q-1} \thinspace dr
	\le \frac{c_8}{q-1} \lambda^{\frac{4}{3}q - \frac{1}{3}} (\ln \lambda)^{2q-1}.
\end{equation}
Now, for $\alpha > 2$ we can choose $q$ sufficiently close to $1$ to make
these expressions dominated by $o(\lambda^{\frac{4}{3}})$.
On the other hand, the first term of the r.h.s. of 
\eqref{H_uv_bound} is asymptotically bounded by
$$
	c_9 \left( \lambda \cdot (\lambda^{2/3})^{1 - \frac{\alpha}{2}} \right)^{\frac{3}{2}} (\ln \lambda)^2
	= c_9 \lambda^{\frac{3}{2} - \frac{1}{2}(\alpha-2)} (\ln \lambda)^2
$$
for $2 < \alpha \le 5$ and by $c_9(\ln \lambda)^2$ otherwise,
and for the second term we have, 
with $\alpha =: 2 + \frac{4}{3}a$ and any $0 \le \epsilon < 1$, 
$$
	\int_{M}^{\infty} u^{-1 - a} (\ln u)^2 \thinspace du
	\le M^{-\epsilon a} \int_{1}^{\infty} u^{-1 - (1-\epsilon)a} (\ln u)^2 \thinspace du
	\le \lambda^{-\frac{2}{3}\epsilon a} \cdot \frac{2}{((1-\epsilon) a)^3}.
$$
Summing up, we obtain
$$
	N(H_\lambda) \le C(\alpha)
		+ 32\pi C_3 \lambda^{\frac{3}{2} - \epsilon(\alpha)} \cdot \frac{128}{27(\alpha-2)^3} \qquad \forall \lambda>0,
$$
for some constant $C(\alpha)$ and sufficiently small $\epsilon(\alpha) > 0$. 
\qed

\vspace{\baselineskip}
It follows from Theorem \ref{thm_H_lambda} that the 
asymptotic eigenvalue distribution of
the weighted Hamiltonian $\tilde{H}$ is given by
$$
	N(\tilde{H} - \lambda) \sim o(\lambda^{\frac{3}{2}}), \quad \lambda \to \infty,
$$
regardless of $\alpha > 2$.
We note that the same approach can be applied to the
purely bosonic model, i.e. the scalar Schr\"odinger operator 
$H_B = -\Delta + x^2y^2$, with $\alpha \ge 0$.
In this case there will be no contribution from the region $\mathcal{B}$ 
when $M \sim \lambda^2$, and the correct leading order eigenvalue asymptotics
for $\alpha = 0$ (see \cite{Simon_asymptotics}),
$$
	N(H_B - \lambda) \sim \lambda^{\frac{3}{2}} \ln \lambda, \quad \lambda \to \infty,
$$
would be matched by the corresponding bound \eqref{first_integral_bound} for the central region with $q=1$,
while for the cut off valleys there is a bound analogous to 
\eqref{second_integral_bound} with 
$$
	\int_{\lambda^{\frac{1}{4+\alpha}}}^{\lambda} 
		(\underbrace{\lambda r^{-\alpha} + r^2 c_1 \lambda^{-4}}_{ \le 2\lambda^{\frac{4}{4+\alpha}} })^{q + \frac{1}{2}} r^{2q-3} (\ln r)^{2q-1} \thinspace dr
	\le \frac{c_{10}}{q-1} \lambda^{\frac{4q+2}{4+\alpha} + 2(q-1)} (\ln \lambda)^{2q-1}.
$$
One could try to improve this by instead letting $M$ be fixed and 
reconsidering the bound on the region $\mathcal{B}_1$.
In any case, we have for a nonzero weight that 
$N(\tilde{H}_B - \lambda) \sim o(\lambda^{\frac{3}{2}})$, $\lambda \to \infty$.

\subsubsection{Asymptotics of $H_u$}

We conclude this section with some useful properties of the 
operator $H_u$ in the limit $u \to \infty$ \cite{Graf}. 
By the change of variable $v = u^{\frac{1}{4}}t$,
we write $H_u = u^{-\frac{1}{2}} \hat{H}(u^{-\frac{3}{2}})$, with
\begin{equation} \label{H_hat}
	\hat{H}(\epsilon) := -\partial_t^2 
		+ \frac{t^2}{2(1 + \epsilon t^2)^{\frac{1}{2}}}
		- \frac{1}{\sqrt{2}(1 + \epsilon t^2)^{\frac{1}{4}}}.
\end{equation}

\begin{prop} \label{prop_H_hat_bound}
	$\hat{H}(\epsilon) \ge -\frac{\epsilon}{4}$, \ 
	for all $\epsilon > 0$.
\end{prop}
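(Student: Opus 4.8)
The plan is to establish the equivalent lower bound $\hat{H}(\epsilon) + \frac{\epsilon}{4} \ge 0$ by the ground-state (Jacobi) substitution. Concretely, I would produce a strictly positive smooth function $\psi$ on $\mathbb{R}$ that is a \emph{supersolution}, $(\hat{H}(\epsilon) + \frac{\epsilon}{4})\psi \ge 0$, and then invoke the identity
\[
	\langle \phi, (\hat{H}(\epsilon) + \tfrac{\epsilon}{4})\phi \rangle
	= \int_{\mathbb{R}} \psi^2 \left| \left( \frac{\phi}{\psi} \right)' \right|^2 dt
	+ \int_{\mathbb{R}} \frac{(\hat{H}(\epsilon) + \frac{\epsilon}{4})\psi}{\psi}\, |\phi|^2\, dt,
\]
valid for all $\phi \in C_0^\infty(\mathbb{R})$, whose right-hand side is then manifestly nonnegative. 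Writing $W := -\psi'/\psi$ and $U := \frac{t^2}{2(1+\epsilon t^2)^{1/2}} - \frac{1}{\sqrt{2}(1+\epsilon t^2)^{1/4}}$ for the potential in \eqref{H_hat}, the supersolution property is exactly the pointwise Riccati inequality $W^2 - W' \le U + \frac{\epsilon}{4}$, so the entire task reduces to constructing a globally defined superpotential $W$ satisfying it.

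First I would complete the square to absorb the harmonic term, taking $W_1 := \frac{t}{\sqrt{2}(1+\epsilon t^2)^{1/4}}$ so that $W_1^2 = \frac{t^2}{2(1+\epsilon t^2)^{1/2}}$ matches it exactly; a short computation then gives $W_1^2 - W_1' = U + P$ with the \emph{nonnegative} remainder $P := \frac{\epsilon t^2}{2\sqrt{2}(1+\epsilon t^2)^{5/4}}$. This has the wrong sign, and since $\sup_t P = \sqrt{2}\,5^{-5/4}$ is a constant independent of $\epsilon$, it cannot be discarded: that would yield only the far-too-weak bound $\hat{H}(\epsilon) \ge -\sup_t P$. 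The target $-\epsilon/4$ is in fact sharp to leading order as $\epsilon \to 0$ (as first-order perturbation theory about the Gaussian ground state of $\hat{H}(0)$ confirms), so an $\epsilon$-order correction to $W_1$ is unavoidable.

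The key step is to correct $W_1$ by $\eta := -\frac{\epsilon t}{4(1+\epsilon t^2)}$, equivalently to multiply the weight $e^{-\int_0^t W_1}$ by the prefactor $(1+\epsilon t^2)^{1/8}$, giving the explicit candidate $\psi = (1+\epsilon t^2)^{1/8} \exp\!\big(-\tfrac{\sqrt{2}}{3\epsilon}[(1+\epsilon t^2)^{3/4}-1]\big)$. The exponent $\tfrac18$ is pinned down by requiring that the cross term $2W_1\eta$ cancel $P$ exactly, both being multiples of $\epsilon t^2 (1+\epsilon t^2)^{-5/4}$. What survives is $W^2 - W' = U + \eta^2 - \eta'$, and I would compute $\eta^2 - \eta' = \frac{\epsilon(4 - 3\epsilon t^2)}{16(1+\epsilon t^2)^2}$. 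With $s := \epsilon t^2 \ge 0$, the remaining inequality $\eta^2 - \eta' \le \frac{\epsilon}{4}$ is equivalent to $4(1+s)^2 - (4-3s) = s(4s+11) \ge 0$, which holds for every $s \ge 0$ (with equality only at $t = 0$). This verifies the Riccati inequality globally and completes the argument.

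The main obstacle is discovering this correction. The naive completion of the square overshoots by $P$, which is not merely of the wrong sign but of larger order in $\epsilon$ than the target near $|t| \sim \epsilon^{-1/2}$, so no crude pointwise or operator bound can recover the sharp constant $\tfrac14$. The proof works because of the algebraic coincidence that a single rational term $\eta$ both annihilates the obstruction $P$ (through $2W_1\eta$) and leaves behind a residual controllable by the elementary estimate $s(4s+11)\ge 0$; fixing the precise prefactor $(1+\epsilon t^2)^{1/8}$ is the one genuinely nonroutine input.
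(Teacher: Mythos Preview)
Your proof is correct and is essentially identical to the paper's. The paper phrases the argument as the factorization $(-i\partial_t+if)(-i\partial_t-if)\ge 0$, i.e.\ $-\partial_t^2+f^2-f'\ge 0$, and takes exactly your $f=W_1+\eta$ (their $f_0+\epsilon f_1$), arriving at the same residual $-\epsilon^2 f_1^2+\epsilon f_1'=-\tfrac{\epsilon}{4}\cdot\frac{1-\frac34\epsilon t^2}{(1+\epsilon t^2)^2}\ge -\tfrac{\epsilon}{4}$, which is your inequality $s(4s+11)\ge 0$; the only difference is that you package the method as a supersolution/Riccati argument and record the explicit $\psi$.
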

\begin{proof}
	We use that for any $f = f(t)$
	$$
		(-i\partial_t + if)(-i\partial_t - if) \ge 0,
	$$
	i.e. $-\partial_t^2 + f^2 - f' \ge 0$.
	As a first attempt, let
	$$
		f_0 := \frac{t}{\sqrt{2}(1 + \epsilon t^2)^{\frac{1}{4}}},
	$$
	resulting in
	\begin{equation} \label{H_hat_first_bound}
		\hat{H}(\epsilon) \ge -\frac{\epsilon t^2}{2\sqrt{2}(1 + \epsilon t^2)^{\frac{5}{4}}}.
	\end{equation}
	While the r.h.s. is bounded and vanishes as $\epsilon \to 0$
	pointwise, it does not so uniformly. Consider instead
	$f = f_0 + \epsilon f_1$,
	with
	$$
		f_1 := -\frac{t}{4(1 + \epsilon t^2)}.
	$$
	We so get the bound \eqref{H_hat_first_bound} pushed to $O(\epsilon)$:
	\begin{eqnarray*}
		\hat{H}(\epsilon) &\ge&
		-\frac{\epsilon t^2}{2\sqrt{2}(1 + \epsilon t^2)^{\frac{5}{4}}} 
		- 2\epsilon f_0f_1 - \epsilon^2f_1^2 + \epsilon f_1' \\
		&=& - \epsilon^2f_1^2 + \epsilon f_1'
		= -\frac{\epsilon}{4} \cdot \frac{1 - \frac{3}{4} \epsilon t^2}{(1 + \epsilon t^2)^2},
		\ge -\frac{\epsilon}{4}. 
	\end{eqnarray*}
\end{proof}

Let $\hat{P}_0$ denote the projection onto the ground state of
$\hat{H}_0 := \hat{H}(0)$, i.e. 
$$
	(P_0 \psi)(t) = \varphi_0(t) \int \overline{\varphi_0(\tau)} \psi(\tau) d\tau,
$$
where $\varphi_0(t) = (\sqrt{2}\pi)^{-\frac{1}{4}} e^{-t^2/(2\sqrt{2})}$ 
is its normalized wave function,
and let $\hat{P}_0^\perp := 1 - \hat{P}_0$.
Note that
$$
	\hat{H}(\epsilon) - \hat{H}_0 
	= -\frac{t^2}{2}\left( 1 - (1+\epsilon t^2)^{-\frac{1}{2}} \right) 
	+ \frac{1}{\sqrt{2}}\left( 1 - (1+\epsilon t^2)^{-\frac{1}{4}} \right)
$$
and
\begin{equation} \label{phi_exp_value}
	\langle \varphi_0, \hat{H}(\epsilon) \varphi_0 \rangle
	= -\frac{\epsilon}{4} + o(\epsilon).
\end{equation}

\begin{prop} \label{prop_H_hat_proj_bound}
	$\hat{H}(\epsilon) \ge \left( -\frac{\epsilon}{4} + o(\epsilon) \right) \hat{P}_0 + c\hat{P}_0^\perp,$ \ 
	as $\epsilon \to 0$, where $c>0$.
\end{prop}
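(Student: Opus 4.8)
The plan is to control $\hat{H}(\epsilon)$ by decomposing every $\psi$ in the form domain as $\psi = a\varphi_0 + \psi^\perp$, with $a = \langle \varphi_0, \psi \rangle$ and $\psi^\perp = \hat{P}_0^\perp \psi$, and to combine three ingredients: the ground-state expectation value \eqref{phi_exp_value}, the spectral gap of $\hat{H}_0$, and the a priori bound of Proposition \ref{prop_H_hat_bound}. Writing $W(\epsilon) := \hat{H}(\epsilon) - \hat{H}_0$ and using $\hat{H}_0 \varphi_0 = 0$ (so that the cross terms involving $\hat{H}_0$ drop out), one gets
$$
	\langle \psi, \hat{H}(\epsilon)\psi \rangle
	= |a|^2 \langle \varphi_0, \hat{H}(\epsilon)\varphi_0 \rangle
	+ 2\re\bigl( \bar{a}\, \langle \varphi_0, W(\epsilon)\psi^\perp \rangle \bigr)
	+ \langle \psi^\perp, \hat{H}(\epsilon)\psi^\perp \rangle .
$$
By \eqref{phi_exp_value} the first coefficient is $-\tfrac{\epsilon}{4} + o(\epsilon)$, so it remains to treat the cross term and the restriction to the range of $\hat{P}_0^\perp$.

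For the cross term I would first note that, since $(1+\epsilon t^2)^{-1/2} \ge 1 - \tfrac{1}{2}\epsilon t^2$, the multiplication operator $W(\epsilon)$ obeys the $\epsilon$-independent, Gaussian-integrable pointwise bound $|W(\epsilon)(t)| \le \epsilon\bigl(\tfrac{1}{4}t^4 + \tfrac{1}{4\sqrt{2}}t^2\bigr)$, whence dominated convergence gives $w(\epsilon) := \|W(\epsilon)\varphi_0\| = O(\epsilon)$. Thus $|\langle \varphi_0, W(\epsilon)\psi^\perp \rangle| \le w(\epsilon)\|\psi^\perp\|$, and once I have a gap estimate $\langle \psi^\perp, \hat{H}(\epsilon)\psi^\perp \rangle \ge c_2 \|\psi^\perp\|^2$ for some fixed $c_2 \in (0,\sqrt{2})$ and small $\epsilon$, the Young inequality $2|a|\,w(\epsilon)\|\psi^\perp\| \le \tfrac{w(\epsilon)^2}{c_2 - c_1}|a|^2 + (c_2 - c_1)\|\psi^\perp\|^2$, for any fixed $c_1 \in (0,c_2)$, absorbs the cross term and yields
$$
	\langle \psi, \hat{H}(\epsilon)\psi \rangle
	\ge \Bigl( -\tfrac{\epsilon}{4} + o(\epsilon) - \tfrac{w(\epsilon)^2}{c_2 - c_1} \Bigr)|a|^2 + c_1 \|\psi^\perp\|^2 .
$$
Since $w(\epsilon)^2 = O(\epsilon^2) = o(\epsilon)$, the coefficient of $|a|^2 = \|\hat{P}_0\psi\|^2$ is $-\tfrac{\epsilon}{4} + o(\epsilon)$ and that of $\|\psi^\perp\|^2 = \|\hat{P}_0^\perp\psi\|^2$ is the fixed constant $c := c_1 > 0$, which is precisely the claimed operator inequality.

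The hard part is the gap estimate $\langle \psi^\perp, \hat{H}(\epsilon)\psi^\perp \rangle \ge c_2 \|\psi^\perp\|^2$, and it is genuinely delicate because $W(\epsilon)$ is \emph{not} a small relative form perturbation of $\hat{H}_0$: its negative part behaves like $-\tfrac{1}{2}t^2$ at infinity, which is as large as the confining potential of $\hat{H}_0$, so one cannot simply dominate it by $\delta(\hat{H}_0 + 1)$. Instead I would prove, by a compactness argument, that $\liminf_{\epsilon \to 0} \inf\{ \langle \psi, \hat{H}(\epsilon)\psi \rangle : \psi \perp \varphi_0,\ \|\psi\|=1 \} \ge \sqrt{2}$; this suffices for any $c_2 < \sqrt{2}$, since $\hat{H}_0$ has eigenvalues $\sqrt{2}\,k$ and hence $\hat{H}_0 \ge \sqrt{2}\,\hat{P}_0^\perp$. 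Arguing by contradiction, take $\epsilon_n \to 0$ and normalized $\psi_n \perp \varphi_0$ with $\langle \psi_n, \hat{H}(\epsilon_n)\psi_n \rangle$ bounded.

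The crucial observation is that the potential $V_\epsilon$ of $\hat{H}(\epsilon)$ satisfies $V_\epsilon(t) \ge \tfrac{R-2}{2\sqrt{2}}$ on $\{|t| \ge R\}$, uniformly in $\epsilon \le 1$ and $R \ge 1$ (splitting into $\epsilon t^2 \le 1$ and $\epsilon t^2 > 1$), and this lower bound tends to $+\infty$ as $R \to \infty$. This yields tightness, $\int_{|t|\ge R}\psi_n^2 \to 0$ uniformly in $n$ as $R \to \infty$; combined with the $H^1$-bound coming from the kinetic term and $V_\epsilon \ge -\tfrac{1}{\sqrt{2}}$, Rellich's theorem gives $\psi_n \to \psi_\infty$ strongly in $L^2$ along a subsequence, with $\|\psi_\infty\| = 1$ and $\psi_\infty \perp \varphi_0$. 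Since $V_{\epsilon_n} \to \tfrac{1}{2}t^2 - \tfrac{1}{\sqrt{2}}$ locally uniformly and is uniformly bounded below, Fatou's lemma together with weak lower semicontinuity of the kinetic energy give $\langle \psi_\infty, \hat{H}_0 \psi_\infty \rangle \le \liminf_n \langle \psi_n, \hat{H}(\epsilon_n)\psi_n \rangle$, while the gap of $\hat{H}_0$ forces the left-hand side to be $\ge \sqrt{2}$. This produces the contradiction, and it is the step where essentially all of the work lies; the cross-term and assembly estimates above are then routine.
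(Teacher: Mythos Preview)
Your overall architecture coincides with the paper's: both decompose $\hat H(\epsilon)$ along $\hat P_0$ and $\hat P_0^\perp$, use \eqref{phi_exp_value} for the $\hat P_0$--block, bound the cross term via $\|W(\epsilon)\varphi_0\|=O(\epsilon)$ and Young's inequality, and reduce everything to a gap estimate $\hat P_0^\perp \hat H(\epsilon)\hat P_0^\perp \ge c_2\,\hat P_0^\perp$ for small $\epsilon$. The cross-term and assembly steps are essentially identical in both proofs.

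The genuine difference is in how the gap estimate is obtained. The paper proves it as a separate lemma by an IMS localization: split $\mathbb{R}$ with cutoffs $f_1,f_2$ at scale $R$, use $V(\epsilon,t)\ge\sqrt{2}$ on $\supp f_2$ for large $R$, compare $f_1\hat H(\epsilon)f_1$ to $f_1\hat H_0 f_1$ (the difference is $O(\epsilon)$ on the compact support of $f_1$), and then commute $f_1$ through $\hat P_0^\perp$ using $\hat H_0\hat P_0=0$ to land on $\hat P_0^\perp\hat H_0\hat P_0^\perp\ge\sqrt{2}\,\hat P_0^\perp$; this yields the explicit constant $c_2=\sqrt{2}/2$. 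Your route is a soft compactness argument: the uniform confinement $V_\epsilon(t)\ge(R-2)/(2\sqrt{2})$ on $\{|t|\ge R\}$ gives tightness and an $H^1$ bound, Rellich plus a diagonal argument give strong $L^2$ convergence, and Fatou combined with weak lower semicontinuity of the kinetic energy pass the quadratic form to the limit, contradicting the gap of $\hat H_0$. This is correct and arguably more transparent; what it sacrifices is quantitativity --- you obtain ``for some $c_2<\sqrt{2}$ and all sufficiently small $\epsilon$'' rather than an explicit threshold and constant. Since Proposition~\ref{prop_H_hat_proj_bound} is itself only asymptotic, nothing is lost for the purposes of the paper. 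One small remark: the ``a priori bound of Proposition~\ref{prop_H_hat_bound}'' that you list among your ingredients is never actually used; the pointwise bound $V_\epsilon\ge -1/\sqrt{2}$ already suffices for the $H^1$ control.
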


We start with

\begin{lem} \label{lem_H_hat_perp_bound}
	For small $\epsilon > 0$,
	$$
		\hat{P}_0^\perp \hat{H}(\epsilon) \hat{P}_0^\perp \ \ge \ \frac{\sqrt{2}}{2} \hat{P}_0^\perp.
	$$
	(Note: $\sqrt{2}$ is the excitation energy of $\hat{H}_0$.)
\end{lem}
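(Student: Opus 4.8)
The plan is to prove the bound by comparing $\hat{H}(\epsilon)$ with $\hat{H}_0$, exploiting that the spectral gap of $\hat{H}_0$ above its ground state $\varphi_0$ equals $\sqrt{2}$ (recall $\hat{H}_0\varphi_0 = 0$). Writing the perturbation as $\hat{H}(\epsilon) - \hat{H}_0 = W_1 + W_2$ with $W_2 = \frac{1}{\sqrt{2}}(1 - (1+\epsilon t^2)^{-1/4}) \ge 0$ and $W_1 = -\frac{t^2}{2}(1 - (1+\epsilon t^2)^{-1/2}) \le 0$, one sees that $W_1 \to -\tfrac{t^2}{2}$ as $|t| \to \infty$, so it is not form-bounded relative to $\hat{H}_0$ with relative bound less than one, and naive perturbation theory is insufficient. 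The key observation is that the confinement lost from $\hat{H}_0$ is restored by the potential of $\hat{H}(\epsilon)$ itself, which still grows (linearly) at infinity; the argument must therefore make use of this confinement, so the main difficulty is to reconcile a spatial localization in $t$ with the projection $\hat{P}_0^\perp$.

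I would first localize using the IMS formula. Take smooth $J_1,J_2$ with $J_1^2 + J_2^2 = 1$, $J_1 \equiv 1$ on $\{|t| \le R\}$ and $\supp J_1 \subseteq \{|t| \le R + L\}$, so that for $\psi \in \mathrm{Ran}\,\hat{P}_0^\perp$ with $\|\psi\| = 1$,
\[
	\langle \psi, \hat{H}(\epsilon)\psi \rangle = \langle J_1\psi, \hat{H}(\epsilon) J_1\psi \rangle + \langle J_2\psi, \hat{H}(\epsilon) J_2\psi \rangle - \| |J_1'|\psi \|^2 - \| |J_2'|\psi \|^2,
\]
where the localization error is supported in the transition region $R \le |t| \le R+L$ and bounded by $C_0/L^2$.

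For the outer term I would use that the potential $U_\epsilon(t) := \frac{t^2}{2(1+\epsilon t^2)^{1/2}} - \frac{1}{\sqrt{2}(1+\epsilon t^2)^{1/4}}$ of $\hat{H}(\epsilon)$ is increasing in $|t|$, so on $\supp J_2 \subseteq \{|t| \ge R\}$ it is bounded below by $U_\epsilon(R)$, giving $\langle J_2\psi, \hat{H}(\epsilon)J_2\psi \rangle \ge U_\epsilon(R)\|J_2\psi\|^2$. For the inner term, on $\{|t| \le R+L\}$ one has $W_1 \ge -\frac{\epsilon(R+L)^4}{4}$ and $W_2 \ge 0$, whence $\langle J_1\psi, \hat{H}(\epsilon)J_1\psi \rangle \ge \langle J_1\psi, \hat{H}_0 J_1\psi \rangle - \frac{\epsilon(R+L)^4}{4}\|J_1\psi\|^2$. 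Although $J_1\psi$ is not orthogonal to $\varphi_0$, its overlap is controlled by the Gaussian tail: since $\psi \perp \varphi_0$ we have $\langle \varphi_0, J_1\psi \rangle = -\langle (1-J_1)\varphi_0, \psi \rangle$, so $|\langle \varphi_0, J_1\psi\rangle| \le \gamma_R := \|(1-J_1)\varphi_0\|$, with $\gamma_R^2 \le \int_{|t|\ge R}|\varphi_0(t)|^2\,dt$ exponentially small in $R$. Decomposing $J_1\psi$ along $\varphi_0$ and $\varphi_0^\perp$ and using $\hat{H}_0\varphi_0 = 0$ together with the gap $\sqrt{2}$ then yields $\langle J_1\psi, \hat{H}_0 J_1\psi \rangle \ge \sqrt{2}\,(\|J_1\psi\|^2 - \gamma_R^2)$.

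Finally I would assemble these estimates and fix the parameters in order. Using $\|J_1\psi\|^2 + \|J_2\psi\|^2 = 1$, first fix $R$ large enough that $\gamma_R$ is negligible and $\tfrac{R^2}{2} - \tfrac{1}{\sqrt{2}} > \sqrt{2}$; then fix $L$ large enough that $C_0/L^2$ is small; then take $\epsilon$ small enough that $U_\epsilon(R) \ge \sqrt{2}$ and that $\frac{\epsilon(R+L)^4}{4}$ is small. Combining the inner and outer bounds gives $\langle \psi, \hat{H}(\epsilon)\psi \rangle \ge \sqrt{2} - (\text{small errors}) \ge \frac{\sqrt{2}}{2}$ for every unit $\psi \in \mathrm{Ran}\,\hat{P}_0^\perp$, which is the claimed operator inequality. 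The crux of the argument, and the step I expect to require the most care, is exactly this reconciliation of the localization with the projection, made possible by the rapid decay of $\varphi_0$ beyond the cutoff radius $R$.
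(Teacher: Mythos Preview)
Your proposal is correct and follows essentially the same route as the paper: IMS localization into an inner region (where $\hat{H}(\epsilon)$ is compared to $\hat{H}_0$ and the spectral gap $\sqrt{2}$ is used) and an outer region (where the growing potential of $\hat{H}(\epsilon)$ dominates), with the incompatibility between the cutoff $J_1$ and the projection $\hat{P}_0^\perp$ controlled through the Gaussian tail of $\varphi_0$. The only cosmetic difference is that the paper works at the operator level, handling the localization--projection mismatch via the commutator identity $\hat{H}_0(\hat{P}_0 f_1 - f_1 \hat{P}_0) = [f_1,-\partial_t^2]\hat{P}_0$ together with $\|(1-f_1)\hat{P}_0\|$, whereas you do the equivalent step at the quadratic-form level via the overlap $\gamma_R = \|(1-J_1)\varphi_0\|$.
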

\begin{proof}
	We again use a partition of unity and let
	$\tilde{f}_i = \tilde{f}_i(s)$, $(i=1,2)$ be smooth functions with
	$\tilde{f}_1^2 + \tilde{f}_2^2 = 1$, $\tilde{f}_2(s) = 0$ for $|s| \le 1$,
	and $\tilde{f}_1(s) = 0$ for $|s| \ge 2$.
	Set $f_i(t) = \tilde{f}_i(t/R)$.
	Then
	$$
		\hat{H}(\epsilon) = f_1 \hat{H}(\epsilon) f_1 + f_2 \hat{H}(\epsilon) f_2 + O(R^{-2}).
	$$
	For large $R$, that error is $\le \sqrt{2}/10$ and
	$$
		\|[f_1,\partial_t^2]\hat{P}_0\|, \qquad \|(1-f_1)\hat{P}_0\|
	$$
	have the same bound (for later use).
	The potential of $\hat{H}(\epsilon)$ in \eqref{H_hat}, 
	denote it $V(\epsilon,t)$, satisfies
	$$
		V(\epsilon,t) \ge \frac{t^2}{2(1 + t^2)^{\frac{1}{2}}} - \frac{1}{\sqrt{2}}
	$$
	for $0<\epsilon<1$. This is $\ge \sqrt{2}$ for 
	$t \in \supp f_2$ if $R$ is large enough.
	Hence, we obtain
	$f_2 \hat{H}(\epsilon) f_2 \ge \sqrt{2} f_2^2$.
	Now, for fixed $R$,
	$$
		f_1 \hat{H}(\epsilon) f_1 = f_1 \hat{H}_0 f_1 + O(\epsilon),
	$$
	and we take $\epsilon$ small enough that 
	$|O(\epsilon)| \le \sqrt{2}/10$. We consider
	\begin{eqnarray}
		\hat{P}_0^\perp f_1 \hat{H}_0 f_1 \hat{P}_0^\perp
		&=& f_1 \hat{P}_0^\perp \hat{H}_0 \hat{P}_0^\perp f_1
		+ (\hat{P}_0^\perp f_1 - f_1 \hat{P}_0^\perp) \hat{H}_0 f_1 \hat{P}_0^\perp \nonumber \\
		&& +\ f_1 \hat{P}_0^\perp \hat{H}_0 (f_1 \hat{P}_0^\perp - \hat{P}_0^\perp f_1). \label{proj_tricks}
	\end{eqnarray}
	Using $\hat{H}_0 \hat{P}_0 = 0$ we have
	$$
		\hat{H}_0 (f_1 \hat{P}_0^\perp - \hat{P}_0^\perp f_1) = \hat{H}_0 (\hat{P}_0 f_1 - f_1 \hat{P}_0) 
		= (f_1 \hat{H}_0 - \hat{H}_0 f_1) \hat{P}_0 = [f_1,-\partial_t^2] \hat{P}_0
	$$
	for the last term of \eqref{proj_tricks}, and similarly for the second.
	Together with the bound 
	$\hat{P}_0^\perp \hat{H}_0 \hat{P}_0^\perp \ge \sqrt{2} \hat{P}_0^\perp$
	we conclude
	$$
		\hat{P}_0^\perp \hat{H}(\epsilon) \hat{P}_0^\perp 
		\ge \sqrt{2}(f_1 \hat{P}_0^\perp f_1 + f_2^2) - \frac{\sqrt{2}}{10}(1+1+2).
	$$
	Multiplying again with $\hat{P}_0^\perp$ and using
	$$
		\hat{P}_0^\perp f_1 \hat{P}_0^\perp f_1 \hat{P}_0^\perp
		= \hat{P}_0^\perp f_1^2 \hat{P}_0^\perp - \hat{P}_0^\perp f_1 \hat{P}_0 f_1 \hat{P}_0^\perp,
	$$
	and $-\hat{P}_0^\perp f_1 \hat{P}_0 = \hat{P}_0^\perp (1 - f_1) \hat{P}_0$,
	we obtain
	\begin{equation*}
		\hat{P}_0^\perp \hat{H}(\epsilon) \hat{P}_0^\perp
		\ge \sqrt{2} \hat{P}_0^\perp (\underbrace{f_1^2 + f_2^2}_{=1}) \hat{P}_0^\perp - \frac{\sqrt{2}}{10}(4+1)\hat{P}_0^\perp
		= \frac{\sqrt{2}}{2} \hat{P}_0^\perp. \qedhere
	\end{equation*}
\end{proof}

\begin{proof}[Proof of Proposition \ref{prop_H_hat_proj_bound}]
	We decompose
	$$
		\hat{H}(\epsilon) = \hat{P}_0 \hat{H}(\epsilon) \hat{P}_0 
		+ \hat{P}_0^\perp \hat{H}(\epsilon) \hat{P}_0^\perp
		+ \hat{P}_0^\perp (\hat{H}(\epsilon) - \hat{H}_0) \hat{P}_0
		+ \hat{P}_0 (\hat{H}(\epsilon) - \hat{H}_0) \hat{P}_0^\perp,
	$$
	since $\hat{P}_0^\perp \hat{H}_0 \hat{P}_0 = 0$.
	The first two terms are greater than
	$-\frac{\epsilon}{4} \hat{P}_0 + o(\epsilon)$ by \eqref{phi_exp_value},
	resp. $\frac{1}{\sqrt{2}} \hat{P}_0^\perp$ by Lemma \ref{lem_H_hat_perp_bound}.
	Expectations of the third one are bounded as
	\begin{eqnarray*}
		|\langle \psi, \hat{P}_0^\perp (\hat{H}(\epsilon) - \hat{H}_0) \hat{P}_0 \psi \rangle |
		&\le& \|\hat{P}_0^\perp \psi\| \|(\hat{H}(\epsilon) - \hat{H}_0)\hat{P}_0 \psi\|
		 \le  c\epsilon \|\hat{P}_0^\perp \psi\| \|\hat{P}_0 \psi\| \\
		&\le& c\epsilon \left( \epsilon^{-\frac{1}{2}} \|\hat{P}_0^\perp \psi\|^2 + \epsilon^{\frac{1}{2}} \|\hat{P}_0 \psi\|^2 \right),
	\end{eqnarray*}
	and so for the fourth one. Therefore,
	$$
		\hat{H}(\epsilon) \ge \left( -\frac{\epsilon}{4} + o(\epsilon) \right) \hat{P}_0 
		+ \left( \frac{1}{\sqrt{2}} - c'\epsilon^{\frac{1}{2}} \right) \hat{P}_0^\perp,
	$$
	where the second bracket is positive for $\epsilon$ small enough.
\end{proof}

\section{CLR bound for operator-valued potentials}

Given a separable Hilbert space $\mathpzc{h}$, we denote by $S^p(\mathpzc{h})$
the set of compact symmetric operators $A$ on $\mathpzc{h}$ s.t. 
$\tr_{\mathpzc{h}} |A|^p = \tr_{\mathpzc{h}} (A^\dagger A)^{p/2} < \infty$.
The following theorem is given as Corollary 2.4 in 
\cite{Hundertmark} (see also \cite{Frank_et_al}):

\begin{thm} \label{thm_Hundertmark}
	Let $\mathpzc{h}$ be some auxiliary Hilbert space and
	$V$ a potential in $L^{d/2}(\mathbb{R}^d; S^{d/2}(\mathpzc{h}))$, $d \ge 3$.
	Then
	$$
		N(-\Delta \otimes 1_{\mathpzc{h}} + V) 
		\le C_d \int_{\mathbb{R}^d} \tr_{\mathpzc{h}} \left| V(x)_- \right|^{\frac{d}{2}} \thinspace dx
	$$
	for some positive constant $C_d$.
\end{thm}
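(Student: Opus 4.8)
The plan is to reduce the eigenvalue count to a singular-value count via the Birman--Schwinger principle and then to replace Cwikel's scalar inequality by an operator-valued analogue. First I would set $W := V(x)_- \ge 0$ and note that adding back the positive part of $V$ can only raise eigenvalues, so $N(-\Delta \otimes 1_{\mathpzc{h}} + V) \le N(-\Delta \otimes 1_{\mathpzc{h}} - W)$. Since $d \ge 3$, the operator $(-\Delta)^{-1}$ is a well-defined positive operator, and the Birman--Schwinger principle identifies the right-hand side with the number of eigenvalues $\ge 1$ of the compact operator
$$
	K := W^{\frac{1}{2}} (-\Delta \otimes 1_{\mathpzc{h}})^{-1} W^{\frac{1}{2}}
$$
on $L^2(\mathbb{R}^d) \otimes \mathpzc{h}$, where $W^{\frac{1}{2}}(x)$ is the fiberwise operator square root. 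Writing $K = AA^*$ with $A := W^{\frac{1}{2}}(x)\,(-\Delta)^{-\frac{1}{2}} = g(x) f(-i\nabla)$, the symbols are the operator-valued $g := W^{\frac{1}{2}}$ (fiberwise, with $\tr_{\mathpzc{h}}|g|^d = \tr_{\mathpzc{h}} W^{\frac{d}{2}}$) and the scalar $f(\xi) := |\xi|^{-1}$.

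The counting step is purely abstract: for any compact $K$ with singular values $s_1 \ge s_2 \ge \dots$, the number $n_{\ge 1}(K)$ of them that are $\ge 1$ satisfies $n_{\ge 1}(K) \le \|K\|_{d/2,\infty}^{d/2}$, where $\|\cdot\|_{p,\infty}$ denotes the weak Schatten quasinorm $\sup_n n^{1/p} s_n$. Since the singular values of $K = AA^*$ are the squares of those of $A$, one has $\|K\|_{d/2,\infty} = \|A\|_{d,\infty}^2$, whence
$$
	N(-\Delta \otimes 1_{\mathpzc{h}} + V) \le n_{\ge 1}(K) \le \|A\|_{d,\infty}^d .
$$
It thus remains only to bound the weak-Schatten norm of the operator-valued symbol product $A = g(x) f(-i\nabla)$.

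The heart of the matter -- and the step I expect to be the main obstacle -- is an operator-valued Cwikel inequality: for $p = d > 2$ (so that $d \ge 3$ is essential),
$$
	\| g(x) f(-i\nabla) \|_{d,\infty} \le C_d \left( \int_{\mathbb{R}^d} \tr_{\mathpzc{h}} |g(x)|^d \, dx \right)^{\frac{1}{d}} \|f\|_{L^{d,\infty}(\mathbb{R}^d)} .
$$
Granting this, one uses $\tr_{\mathpzc{h}} |g|^d = \tr_{\mathpzc{h}} W^{\frac{d}{2}} = \tr_{\mathpzc{h}}|V_-|^{\frac{d}{2}}$ together with the fact that $f(\xi) = |\xi|^{-1}$ lies in $L^{d,\infty}(\mathbb{R}^d)$ with a purely dimensional norm, to conclude
$$
	N(-\Delta \otimes 1_{\mathpzc{h}} + V) \le C_d^d \|f\|_{L^{d,\infty}}^d \int_{\mathbb{R}^d} \tr_{\mathpzc{h}} |V(x)_-|^{\frac{d}{2}} \, dx ,
$$
which is the asserted bound.

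The difficulty is concentrated entirely in the operator-valued Cwikel inequality. In the scalar case it follows by interpolating the Hilbert--Schmidt identity $\|g f\|_2 = \|g\|_{L^2}\|f\|_{L^2}$ against weak-type endpoint estimates; here the scalar multiplier $g(x)$ is replaced by a fiber operator, and one must control the trace over $\mathpzc{h}$ uniformly in $x$ while retaining the weak-Schatten (rather than ordinary Schatten) endpoint. Verifying that Cwikel's interpolation argument survives this upgrade -- that is, that the $S^{d/2}(\mathpzc{h})$-valued structure is compatible with the weak trace-ideal machinery -- is precisely the content of Corollary~2.4 in \cite{Hundertmark}; an alternative route is Lieb's heat-semigroup method, in which operator-valued diamagnetic and Golden--Thompson inequalities reproduce the same power of $\tr_{\mathpzc{h}} W^{\frac{d}{2}}$. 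Either way the passage from scalar to operator-valued symbols is the crux, and for it I would lean directly on \cite{Hundertmark}.
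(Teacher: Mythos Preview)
The paper does not prove this theorem at all: it is quoted verbatim as Corollary~2.4 of \cite{Hundertmark} (with a pointer also to \cite{Frank_et_al}), and is used as a black box. So there is no ``paper's own proof'' to compare against.

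That said, your sketch is a correct outline of how the cited result is obtained: the Birman--Schwinger reduction to counting eigenvalues $\ge 1$ of $K = W^{1/2}(-\Delta)^{-1}W^{1/2}$, the passage to the weak Schatten quasinorm via $n_{\ge 1}(K) \le \|K\|_{d/2,\infty}^{d/2} = \|A\|_{d,\infty}^{d}$, and then an operator-valued Cwikel inequality for $A = g(x)f(-i\nabla)$ with $g = W^{1/2}$ and $f(\xi) = |\xi|^{-1}$. You rightly identify the only nontrivial step as the operator-valued upgrade of Cwikel's weak trace-ideal bound, and you rightly defer that step to \cite{Hundertmark}; that is exactly what the paper does too, only without spelling out the Birman--Schwinger scaffolding. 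In short: your proposal is not a different route so much as an unpacking of the very reference the paper invokes.
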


\noindent
It is also noted in \cite{Hundertmark} that the operator
$H_d := -\Delta \otimes 1_{\mathpzc{h}} + V$ is self-adjoint 
and semi-bounded from below
on the corresponding Sobolev space $H_1(\mathbb{R}^d;\mathpzc{h})$,
and that $\sigma_{\textrm{ess}}(H_d) \subseteq [0,\infty)$.

From the above theorem we can derive the following \cite{Safronov}:

\begin{lem} \label{lem_integral_bound}
	Assume $\mathpzc{h}$ is an auxiliary Hilbert space
	and $V: [0,\infty) \to S^{\frac{3}{2}}(\mathpzc{h})$ 
	a smooth operator-valued potential.
	Let $H_1 := -\partial_x^2 \otimes 1_{\mathpzc{h}} + V$
	be self-adjoint and defined by Friedrichs extension on $C_0^\infty(\mathbb{R}_+;\mathpzc{h})$.
	Then
	$$
		N(H_1) 
		\le 4\pi C_3 \int_0^\infty \tr_{\mathpzc{h}} \left| V(x)_- \right|^{\frac{3}{2}} x^2 \thinspace dx.
	$$
\end{lem}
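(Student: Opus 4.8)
The plan is to lift the half-line problem to a three-dimensional radial problem and invoke the operator-valued CLR bound of Theorem \ref{thm_Hundertmark} with $d=3$. The key observation is that $-\partial_x^2$ on $L^2(\mathbb{R}_+)$ with Dirichlet boundary condition at the origin is exactly the $s$-wave (radial) part of $-\Delta$ on $\mathbb{R}^3$: writing a radial function $f(r)$ as $f = g/r$ one computes $-\Delta f = -g''/r$, and the map $(Uf)(r) := \sqrt{4\pi}\, r f(r)$ is a unitary from the radial subspace of $L^2(\mathbb{R}^3)$ onto $L^2(\mathbb{R}_+,dr)$ intertwining the radial part of $-\Delta$ with $-\partial_r^2$ (the centrifugal term vanishes for $\ell=0$). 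Tensoring with $1_{\mathpzc{h}}$, the same correspondence holds for the operator-valued operators, and one checks directly that $U\otimes 1_{\mathpzc{h}}$ preserves both the kinetic form, $\int_{\mathbb{R}^3}|\nabla f|^2\,dx = \int_0^\infty |g'|^2\,dr$ for radial $f$, and the potential form.

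First I would dispose of the (possibly unbounded) positive part of $V$. Since $V(x) \ge -V(x)_-$ as operators on $\mathpzc{h}$ for each $x$, the min-max principle gives $N(H_1) \le N(-\partial_x^2 \otimes 1_{\mathpzc{h}} - V_-)$, where the potential $-V_-$ is now negative with $\tr_{\mathpzc{h}}|{-V_-}|^{3/2} = \tr_{\mathpzc{h}}|V_-|^{3/2}$. If $\int_0^\infty \tr_{\mathpzc{h}}|V(x)_-|^{3/2} x^2\,dx = \infty$ the claimed bound is trivial, so I may assume it is finite, whence $-V_-(|\cdot|) \in L^{3/2}(\mathbb{R}^3;S^{3/2}(\mathpzc{h}))$ and Theorem \ref{thm_Hundertmark} applies to $H_3 := -\Delta \otimes 1_{\mathpzc{h}} - V_-(|\cdot|)$ on $L^2(\mathbb{R}^3;\mathpzc{h})$. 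This reduction is what allows the harmonic-oscillator-type growth of the positive part of $V$ in the applications, where only the negative part enters the bound.

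Next I would transport trial subspaces. Given any finite-dimensional $W \subset C_0^\infty(\mathbb{R}_+;\mathpzc{h})$ on which the form of $-\partial_x^2 \otimes 1_{\mathpzc{h}} - V_-$ is negative, the radial functions $(U\otimes 1_{\mathpzc{h}})^{-1}W \subset L^2(\mathbb{R}^3;\mathpzc{h})$ have the same dimension and, since $U\otimes 1_{\mathpzc{h}}$ is unitary and intertwines both the kinetic form and the potential form (the latter because $V_-$ acts by multiplication in $r$), carry the same negative form value for $H_3$; hence $N(-\partial_x^2 \otimes 1_{\mathpzc{h}} - V_-) \le N(H_3)$. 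Applying Theorem \ref{thm_Hundertmark} and converting the volume integral to a radial one via $\int_{\mathbb{R}^3} F(|x|)\,dx = 4\pi \int_0^\infty F(r)\,r^2\,dr$ then yields
$$
	N(H_1) \le N(H_3) \le C_3 \int_{\mathbb{R}^3} \tr_{\mathpzc{h}}\left|V_-(|x|)\right|^{\frac{3}{2}}\,dx = 4\pi C_3 \int_0^\infty \tr_{\mathpzc{h}}\left|V(x)_-\right|^{\frac{3}{2}} x^2\,dx,
$$
which is the asserted inequality, with the constant $4\pi C_3$ inherited directly from the three-dimensional constant $C_3$.

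The main obstacle I anticipate is the careful justification of the radial reduction at the level of self-adjoint operators rather than of formal expressions: one must verify that the Friedrichs extension on $C_0^\infty(\mathbb{R}_+;\mathpzc{h})$ corresponds precisely to the Dirichlet ($\ell=0$) realization on $\mathbb{R}^3$ (so that the boundary behaviour $g = \sqrt{4\pi}\,rf \to 0$ at the origin is the correct one), and that the operator-valued potential and Sobolev form domains transport correctly under $U \otimes 1_{\mathpzc{h}}$. Working throughout with the quadratic-form and min-max characterization of $N(\cdot)$, as above, should let me sidestep the delicate domain questions, since I only need the one-sided inequality $N(H_1) \le N(H_3)$, which is obtained by exhibiting trial subspaces in a common form core and does not require matching the full operator closures.
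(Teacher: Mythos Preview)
Your proposal is correct and follows essentially the same route as the paper: lift $u\in C_0^\infty(\mathbb{R}_+;\mathpzc{h})$ to the radial $\psi(x)=u(|x|)/|x|$ on $\mathbb{R}^3$, observe that the quadratic forms match up to the factor $4\pi$, and conclude $N(H_1)\le N(H_3)$ via the variational characterization, then apply Theorem~\ref{thm_Hundertmark}. Your preliminary replacement of $V$ by $-V_-$ (to guarantee the $L^{3/2}(\mathbb{R}^3;S^{3/2}(\mathpzc{h}))$ hypothesis of Theorem~\ref{thm_Hundertmark}) is a point the paper leaves implicit, and your remarks on handling the Friedrichs extension purely at the level of trial subspaces are in the same spirit as the paper's argument.
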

\begin{proof}
	Consider $N(H_1) = \sup_{W \in \mathcal{W}_1} \dim W$,
	where $\mathcal{W}_1$ denotes the set of linear subspaces 
	$W \subseteq C_0^\infty(\mathbb{R}_+;\mathpzc{h}) \subseteq L^2(\mathbb{R}_+;\mathpzc{h})$
	s.t.
	$
		\langle u, H_1 u \rangle < 0 \  \forall u \in W.
	$
	For $u \in W \in \mathcal{W}_1$ and $x \in \mathbb{R}^3$,
	$r := |x|$, we let $\psi(x) := \frac{1}{r}u(r)$. Then
	$\psi \in C_0^\infty(\mathbb{R}^3;\mathpzc{h})$ and
	\begin{eqnarray*}
		\langle \psi, H_3 \psi \rangle 
		&=& \int_{\mathbb{R}^3} \langle \psi, \left( -\Delta_{\mathbb{R}^3} \otimes 1_\mathpzc{h} + V(|x|) \right) \psi \rangle_\mathpzc{h} \thinspace dx \\
		&=& |S^2| \int_0^\infty \left\langle \frac{1}{r}u(r), 
			\left( -\frac{1}{r} \frac{\partial^2}{\partial r^2} r \otimes 1_\mathpzc{h} + V(r) \right) \frac{1}{r}u(r) \right\rangle_\mathpzc{h} r^2 \thinspace dr \\
		&=& 4\pi \int_0^\infty \left\langle u(r), 
			\left( -\partial_r^2 \otimes 1_\mathpzc{h} + V(r) \right) u(r) \right\rangle_\mathpzc{h} dr \\
		&=& 4\pi \langle u, H_1 u \rangle < 0.
	\end{eqnarray*}
	Hence, $\psi \in W' \in \mathcal{W}_3$ for some $W'$,
	where $\mathcal{W}_3$ denotes the corresponding set of linear subspaces 
	$W' \subseteq C_0^\infty(\mathbb{R}^3;\mathpzc{h}) \subseteq L^2(\mathbb{R}^3;\mathpzc{h})$
	s.t.
	$
		\langle \psi, H_3 \psi \rangle < 0 \  \forall \psi \in W'.
	$
	Also, if $u_1,u_2$ in $W \in \mathcal{W}_1$ are orthogonal, then
	so are the associated $\psi_1, \psi_2$ in $W' \in \mathcal{W}_3$,
	so that to each $W \in \mathcal{W}_1$ there corresponds a $W' \in \mathcal{W}_3$
	with $\dim W' \ge \dim W$. Hence,
	\begin{eqnarray*}
		N(H_1) &=& \sup_{W \in \mathcal{W}_1} \dim W 
		\le \sup_{W' \in \mathcal{W}_3} \dim W' = N(H_3) \\
		&\le& C_3 \int_{\mathbb{R}^3} \tr_{\mathpzc{h}} \left| V(|x|)_- \right|^{\frac{3}{2}} \thinspace dx 
		= 4\pi C_3 \int_0^\infty \tr_{\mathpzc{h}} \left| V(r)_- \right|^{\frac{3}{2}} r^2 \thinspace dr,
	\end{eqnarray*}
	by Theorem \ref{thm_Hundertmark}.
\end{proof}

\begin{lem} \label{lem_integral_crit_bound}
	With $\mathpzc{h}$ and $V$ as above,
	let $H_2 := (-\partial_x^2 - \frac{1}{4x^2}) \otimes 1_{\mathpzc{h}} + V$
	be self-adjoint and defined by Friedrichs extension on $C_0^\infty((1,\infty);\mathpzc{h})$.
	Then
	$$
		N(H_2) 
		\le 4\pi C_3 \int_1^\infty \tr_{\mathpzc{h}} \left| V(x)_- \right|^{\frac{3}{2}} x^2 (\ln x)^2 \thinspace dx.
	$$
\end{lem}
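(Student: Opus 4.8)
The plan is to mirror the proof of Lemma~\ref{lem_integral_bound} as closely as possible, so that the three--dimensional Cwikel--Lieb--Rozenblum bound of Theorem~\ref{thm_Hundertmark} can again be applied. The only genuinely new feature is the term $-\tfrac{1}{4x^2}$: on the half--line this is the \emph{critical} inverse--square potential (the constant $\tfrac14$ is exactly the Hardy constant), so it cannot be absorbed into $-\partial_x^2$ by rescaling, and it is precisely responsible for the logarithmic factor $(\ln x)^2$ in the claimed bound. I would remove it by a ground--state (Agmon) substitution before passing to three dimensions.

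Concretely, on $(1,\infty)$ the function $h(x):=x^{1/2}\ln x$ is positive and solves $-h''-\tfrac{1}{4x^2}h=0$ (the indicial equation for $x^s$ has the double root $s=\tfrac12$). For $u\in C_0^\infty((1,\infty);\mathpzc{h})$ I set $u=hg$; the standard ground--state identity then gives
\[
	\langle u, H_2 u\rangle
	= \int_1^\infty h^2\Big(\|\partial_x g\|_{\mathpzc{h}}^2 + \langle g, V g\rangle_{\mathpzc{h}}\Big)\,dx ,
\]
the boundary terms vanishing because $g=u/h$ has compact support in $(1,\infty)$ (consistently, $h(1)=0$ matches the Dirichlet condition). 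Next I introduce the logarithmic variable $R:=\ln x$, which maps $(1,\infty)$ bijectively onto $(0,\infty)$, and write $g(x)=G(R)$. Since $h^2=x(\ln x)^2$, $\partial_x g=G'(R)/x$ and $dx=x\,dR$, a short computation turns the kinetic part into $\int_0^\infty R^2\|G'\|^2\,dR$ and the potential part into $\int_0^\infty R^2\langle G,\widehat V G\rangle\,dR$, where $\widehat V(R):=x^2V=e^{2R}V(e^R)$. In other words, the critical Hardy operator is flattened exactly into the three--dimensional radial form: with $\psi(y):=G(|y|)$ on $\mathbb{R}^3$ one has $4\pi\langle u,H_2u\rangle=\langle\psi,(-\Delta_{\mathbb{R}^3}\otimes1_{\mathpzc{h}}+\widehat V(|y|))\psi\rangle$.

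From here the argument is identical to that of Lemma~\ref{lem_integral_bound}: the linear bijection $u\mapsto G\mapsto\psi$ sends orthogonal negative--subspaces to orthogonal negative--subspaces, so $N(H_2)\le N(-\Delta_{\mathbb{R}^3}\otimes1_{\mathpzc{h}}+\widehat V)$, and Theorem~\ref{thm_Hundertmark} bounds the latter by $C_3\int_{\mathbb{R}^3}\tr_{\mathpzc{h}}|\widehat V(y)_-|^{3/2}\,dy=4\pi C_3\int_0^\infty\tr_{\mathpzc{h}}|\widehat V(R)_-|^{3/2}R^2\,dR$. Finally, undoing the substitutions via $|\widehat V_-|^{3/2}=x^3|V_-|^{3/2}$, $R^2=(\ln x)^2$ and $dR=dx/x$ returns exactly $4\pi C_3\int_1^\infty\tr_{\mathpzc{h}}|V(x)_-|^{3/2}x^2(\ln x)^2\,dx$. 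I expect the main obstacle to be conceptual rather than computational: recognizing that the criticality of $-\tfrac{1}{4x^2}$ forces one to use the logarithmic zero mode $h=x^{1/2}\ln x$ (a subcritical coefficient would reduce directly to Lemma~\ref{lem_integral_bound} with no log), and checking that the combined substitution is a genuine form identity with vanishing boundary contributions on the core $C_0^\infty((1,\infty);\mathpzc{h})$.
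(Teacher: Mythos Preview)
Your proof is correct and is essentially the paper's argument: both remove the critical $-\tfrac{1}{4x^2}$ by a ground-state substitution together with the logarithmic change of variable $t=\ln x$, and then lift to $\mathbb{R}^3$ to invoke Theorem~\ref{thm_Hundertmark}. The only cosmetic difference is that the paper factors out the other zero mode $x^{1/2}$ and then calls Lemma~\ref{lem_integral_bound}, whereas you factor out $x^{1/2}\ln x$ and pass to $\mathbb{R}^3$ in one step; the composite map $u\mapsto\psi$ is literally the same in both cases.
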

\begin{proof}
	We have for $u \in C_0^\infty((1,\infty);\mathpzc{h})$ that
	$$
		\langle u, H_2 u \rangle
		= \int_1^\infty \left( 
			\|u'(x)\|_{\mathpzc{h}}^2 - \frac{1}{4x^2} \|u(x)\|_{\mathpzc{h}}^2 + \langle u(x),V(x)u(x) \rangle_{\mathpzc{h}}
			\right) dx.
	$$
	Note that
	\begin{eqnarray*}
		\left\| \left({ \textstyle \partial_x - \frac{1}{2x} }\right)u(x) \right\|_\mathpzc{h}^2
		&=& \|u'(x)\|_\mathpzc{h}^2 
			- \frac{1}{2x}\left( \langle u'(x),u(x) \rangle_\mathpzc{h} + \langle u(x),u'(x) \rangle_\mathpzc{h} \right) \\
		&&	+\ \frac{1}{4x^2}\|u(x)\|_\mathpzc{h}^2,
	\end{eqnarray*}
	so that after integrating by parts,
	$$
		\langle u, H_2 u \rangle
		= \int_1^\infty \left( 
			\left\| \left({ \textstyle \partial_x - \frac{1}{2x} }\right)u(x) \right\|_\mathpzc{h}^2 
				+ \langle u(x),V(x)u(x) \rangle_{\mathpzc{h}}
			\right) dx.
	$$
	We can write $u(x) = x^{\frac{1}{2}}v(x)$, 
	with $v \in C_0^\infty((1,\infty);\mathpzc{h})$, implying
	$$
		\langle u, H_2 u \rangle
		= \int_1^\infty \left( 
			\| x^{\frac{1}{2}}v'(x) \|_\mathpzc{h}^2 + \langle v(x),xV(x)v(x) \rangle_{\mathpzc{h}}
			\right) dx.
	$$
	Put $t := \ln x$ and $w(t) := v(e^t)$. 
	Then $w \in C_0^\infty((0,\infty);\mathpzc{h})$ and
	\begin{eqnarray*}
		\langle u, H_2 u \rangle
		&=& \int_0^\infty \left( 
			\|w'(t)\|_\mathpzc{h}^2 + \langle w(t),e^{2t}V(e^t)w(t) \rangle_{\mathpzc{h}}
			\right) dt \\
		&=& \int_0^\infty \left\langle w(t), 
			\left( -\partial_t^2 \otimes 1_\mathpzc{h} + e^{2t}V(e^t) \right)w(t) 
			\right\rangle_{\mathpzc{h}} dt.
	\end{eqnarray*}
	We also note that there is
	a 1-to-1 correspondence between linearly independent
	sets of such $u \in C_0^\infty((1,\infty);\mathpzc{h})$ 
	and $w \in C_0^\infty((0,\infty);\mathpzc{h})$.
	Applying Lemma \ref{lem_integral_bound} with the potential
	$W(x) = e^{2x}V(e^x)$ we find
	\begin{eqnarray*}
		N(H_2) &=& N\left( -\partial_x^2 \otimes 1_\mathpzc{h} + e^{2x}V(e^x) \right) \\
		&\le& 4\pi C_3 \int_0^\infty \tr_{\mathpzc{h}} \left| \left( e^{2x}V(e^x) \right)_- \right|^{\frac{3}{2}} x^2 \thinspace dx \\
		&=&   4\pi C_3 \int_1^\infty \tr_{\mathpzc{h}} \left| V(s)_- \right|^{\frac{3}{2}} s^2 (\ln s)^2 \thinspace ds,
	\end{eqnarray*}
	where we substituted $s := e^x$.
\end{proof}

\subsubsection*{Acknowledgements}

I would like to express my sincere thanks to Oleg Safronov
for initiating this approach and pointing out the CLR bound
of Lemma \ref{lem_integral_crit_bound}.
I would also like to thank Gian Michele Graf 
(in particular in connection with Propositions \ref{prop_H_hat_bound} and \ref{prop_H_hat_proj_bound}), 
Jens Hoppe, and Ari Laptev for useful discussions and valuable suggestions, 
as well as Giovanni Felder and ETH Z\"urich for hospitality.
This work was supported by the Swedish Research Council and
the European Science Foundation activity MISGAM.

\end{document}